\def\squarebox#1{\hbox to #1{\hfill\vbox to #1{\vfill}}}
\newcommand{\qed}{\hspace*{\fill}
            \vbox{\hrule\hbox{\vrule\squarebox{.667em}\vrule}\hrule}\smallskip\newline}
\newtheorem{THEOREM}{Theorem}
\newenvironment{theorem}{\begin{THEOREM} \hspace{-.85em} {\bf :} \rm}                        {\end{THEOREM}}
\newtheorem{LEMMA}{Lemma}
\newenvironment{lemma}{\begin{LEMMA} \hspace{-.85em} {\bf :} \rm}                      {\end{LEMMA}}
\newtheorem{COROLLARY}{Corollary}
\newenvironment{corollary}{\begin{COROLLARY} \hspace{-.85em} {\bf :} \rm}                          {\end{COROLLARY}}
\newenvironment{proof}{\noindent {\bf Proof:}}{\qed}
\newtheorem{DEFINITION}{Definition}
\newenvironment{definition}{\begin{DEFINITION} \hspace{-.85em} {\bf :} \rm}
                            {\end{DEFINITION}}
\newtheorem{CLAIM}{Claim}
                      {\end{CLAIM}}
\begin{document}

\setlength{\baselineskip}{18pt}

\title{Algorithms for Locating Constrained Optimal Intervals\footnote{An earlier version of the second part of this work appeared in \textit{Proceedings of the 18th International Symposium on Algorithms and
Computation}, Japan, 2007.}}

\author{Hsiao-Fei Liu$^{1}$, Peng-An Chen$^1$, and Kun-Mao Chao$^{1,2,3}$\\
\\
$^1$Department of Computer Science and Information Engineering \\
$^2$Graduate Institute of Biomedical Electronics and
Bioinformatics \\
$^3$Graduate Institute of Networking and Multimedia \\
National Taiwan University, Taipei, Taiwan 106
}\maketitle

\begin{abstract}
In this work, we obtain the following new results.
\begin{itemize}
\item Given a sequence $D=((h_1,s_1), (h_2,s_2) \ldots, (h_n,s_n))$ of number pairs, where $s_i>0$ for all $i$, and a number $L_h$, we propose an $O(n)$-time algorithm for
finding an index interval $[i,j]$ that maximizes
$\frac{\sum_{k=i}^{j} h_k}{\sum_{k=i}^{j} s_k}$ subject to
$\sum_{k=i}^{j} h_k \geq L_h$.

\item  Given a sequence $D=((h_1,s_1), (h_2,s_2) \ldots, (h_n,s_n))$ of number pairs, where $s_i=1$ for all $i$, and an integer $L_s$ with $1\leq L_s\leq n$,
we propose an $O(n\frac{T(L_s^{1/2})}{L_s^{1/2}})$-time algorithm
for finding an index interval $[i,j]$ that maximizes
$\frac{\sum_{k=i}^{j} h_k}{\sqrt{\sum_{k=i}^{j} s_k}}$ subject to
$\sum_{k=i}^{j} s_k \geq L_s$, where $T(n')$ is the time required to
solve the all-pairs shortest paths problem on a graph of $n'$ nodes.
By the latest result of Chan \cite{Chan}, $T(n')=O(n'^3
\frac{(\log\log n')^3}{(\log n')^2})$, so our algorithm runs in
subquadratic time $O(nL_s\frac{(\log\log L_s)^3}{(\log L_s)^2})$.
\end{itemize}

\end{abstract}

\thispagestyle{plain}

\section {Introduction}
 Given a sequence $D=((h_1,s_1), (h_2,s_2) \ldots, (h_n,s_n))$ of number pairs, where $s_i>0$ for all $i$, define the $support$, $hit$-$support$, $confidence$, $eccentricity$,
 and $aberrance$ of an index interval $I=[i,j]=\{i,i+1,\ldots,j\}$ to be $\sum_{k=i}^js_k$, $\sum_{k=i}^jh_k$, $\frac{\sum_{k=i}^jh_k}{\sum_{k=i}^js_k}$, $\frac{\sum_{k=i}^jh_k}{\sqrt{\sum_{k=i}^js_k}}$, and  $\frac{|\sum_{k=i}^jh_k|}{\sqrt{\sum_{k=i}^js_k}},$ respectively.
 Denote by $sup(i,j)$, $hit(i,j)$, $conf(i,j)$, $ecc(i,j),$ and $aberr(i,j)$ the support, hit-support, confidence, eccentricity, and aberrance of index interval $I=[i,j]$, respectively. The sequence $D$ is said to be $plain$ if and only if $s_i=1$ for all $i$.
 An index interval $I=[i,j]$ is said to be $amble$ with respect to a support lower bound $L_s$ if $sup(i,j)\geq L_s$. An index interval $I=[i,j]$ is said to be $endorsed$ with respect to a hit-support lower bound $L_h$ if $hit(i,j)\geq L_h$. An index interval $I=[i,j]$ is said to be $confident$ with respect to a confidence lower bound $L_c$ if $conf(i,j)\geq L_c$.
 Consider the following problems arising in association rule mining~\cite{Fukuda99,Fukuda01},
 computational
 biology~\cite{Allison03,Bernholt2007,Chen05,Cheng08,Chung,Fan,Goldwasser,Huang,Kim,Lee07,Lin,Wang03},
 and statistics~\cite{Davies}.

\begin{itemize}

          \item {\sc Hit-Constrained Max Confidence Interval (HCI) Problem}: Given a sequence $D=((h_1,s_1), (h_2,s_2), \ldots, (h_n,s_n))$ of number pairs, where $s_i>0$ for all $i$, and a hit-support lower bound $L_h$, find an endorsed interval $I=[i,j]$ maximizing the confidence $conf(i,j).$ Bernholt~$et~al.$~\cite{Bernholt2007}'s results imply an $O(n\log n)$-time algorithm for this problem, and we give an $O(n)$-time algorithm in this paper.

           \item {\sc Plain Support-Constrained Max Eccentricity Interval (PSEI) Problem}: Given a plain sequence $D=((h_1,s_1), (h_2,s_2), \ldots, (h_n,s_n))$ of number pairs, where $s_i=1$ for all $i$, and a support lower bound $L_s$ with $1\leq L_s \leq n$, find an amble interval $I=[i,j]$ maximizing the eccentricity $ecc(i,j).$  Lipson $et~al.$~\cite{Lipson}
            proposed an approximation scheme for the case $L_s=1$. Specifically, given an $\epsilon \in (0,1/5]$, their algorithm guarantees to outputs an index interval $[i,j]$ such that $ecc(i,j)$ is at least Opt$/\alpha(\epsilon)$ in $O(n\epsilon^{-2})$ time, where Opt = $\max\{ecc(i',j'): 1\leq i' \leq j'\leq  n\}$ and $\alpha(\epsilon) = (1-\sqrt{2\epsilon(2+\epsilon)})^{-1}$. In this paper, we propose an $O(n\frac{T(L_s^{1/2})}{L_s^{1/2}})$-time algorithm for this problem, where $T(n')$ is the time required to solve the all-pairs shortest paths
            problem on a graph of $n'$ nodes. By the latest result of Chan
            \cite{Chan}, $T(n')=O(n'^3 \frac{(\log\log n')^3}{(\log n')^2})$, so
            our algorithm runs in subquadratic time $O(nL_s\frac{(\log\log L_s)^3}{(\log L_s)^2})$. To the best of our knowledge, it is the first subquadratic result for this problem.

          \item {\sc Confidence-Constrained Max Hit Interval (CHI) Problem}: Given a sequence $D=((h_1,s_1), (h_2,s_2), \ldots, (h_n,s_n))$ of number pairs, where $s_i>0$ for all $i$, and a confidence lower bound $L_c$, find a confident interval $I=[i,j]$ maximizing the hit-support $hit(i,j).$ Bernholt~$et~al.$~\cite{Bernholt2007}'s results imply an $O(n\log n)$-time algorithm for this problem, and recently, Cheng~$et~al.$~\cite{Cheng08} obtained an $O(n)$-time algorithm.

          \item {\sc Support-Constrained Max Confidence Interval (SCI) Problem}: Given a sequence $D=((h_1,s_1), (h_2,s_2), \ldots, (h_n,s_n))$ of number pairs, where $s_i>0$ for all $i$, and a support lower bound $L_s$, find an amble interval $I=[i,j]$ maximizing the confidence $conf(i,j).$ This problem was studied in~\cite{Bernholt2007,Chung,Fukuda99,Goldwasser,Huang,Kim,Lee07,Lin} and can be solved in $O(n)$ time~\cite{Bernholt2007,Chung,Fukuda99,Goldwasser,Lee07}.

          \item {\sc Confidence-Constrained Max Support Interval (CSI) Problem}: Given a sequence $D=((h_1,s_1), (h_2,s_2), \ldots, (h_n,s_n))$ of number pairs, where $s_i>0$ for all $i$, and a confidence lower bound $L_c$, find
            a confident interval $I=[i,j]$ maximizing the support $sup(i,j).$ This problem was studied in~\cite{Allison03,Bernholt2007,Chen05,Fukuda99,Wang03} and can be solved in $O(n)$ time~\cite{Fukuda99}.

          \item {\sc Support-Constrained Max Aberrance Interval (SAI) Problem}: Given a sequence $D=((h_1,s_1), (h_2,s_2), \ldots, (h_n,s_n))$ of number pairs, where $s_i>0$ for all $i$, a support lower bound $L_s$, and a support upper bound $U_s$, find
             an index interval $I=[i,j]$ maximizing the aberrance $aberr(i,j)$ subject to $L_s\leq sup(i,j)\leq U_s.$ Bernholt~$et~al.$~\cite{Bernholt2007} proposed an $O(n)$-time for this problem.

          \item {\sc Support-Constrained Max Hit Interval (SHI) Problem}: Given a sequence $D=((h_1,s_1), (h_2,s_2), \ldots, (h_n,s_n))$ of number pairs, where $s_i>0$ for all $i$, and a support lower bound $L_s$, find
            an amble interval $I=[i,j]$ maximizing the hit-support $hit(i,j).$ This problem was solvable in $O(n)$ time by algorithms in~\cite{Bernholt2007,Fan,Lin}.

\end{itemize}

Results for these problems are summarized in Table~\ref{Results on
problems of finding constrained optimal intervals}. The rest of this
paper is organized as follows. In Section~2, we give a linear-time
algorithm for the HCI problem, which is an adaption of the algorithm
by Chung and Lu~\cite{Chung}. In Section~3, we give the first
subquadratic time algorithm for the PSEI problem. Finally, we close
the paper by mentioning a few open problems.

\begin{table}[h!]\label{Results on problems of finding constrained optimal intervals}
\centering
\caption{Results on problems of finding constrained optimal intervals.}
\begin{tabular}{|l|l|l|}
\multicolumn{3}{c}{}\\
\hline
Paper&Problem&
Time\\
\hline
Bernholt~$et~al.$~\cite{Bernholt2007}$\ \checkmark$ & HCI & $O(n\log n)$ \\
This paper & HCI & $O(n)$ \\
\hline
Lipson~$et~al.~\cite{Lipson}$ & PSEI & Approximation Scheme for $L_s=1$\\
This paper & PSEI & $O(nL_s\frac{(\log\log
L_s)^3}{(\log L_s)^2})$  \\
\hline
Bernholt~$et~al.$~\cite{Bernholt2007}$\ \checkmark$ & CHI & $O(n\log n)$ \\
Cheng~$et~al.$~\cite{Cheng08} & CHI & $O(n)$ \\
\hline
Huang~\cite{Huang}& SCI & $O(nL_s)^{*}$ \\
Fukuda~$et~al.$~\cite{Fukuda99}& SCI & $O(n)$  \\
Lin~$et~al.$~\cite{Lin} & SCI & $O(n\log L_s)^{*}$ \\
Kim~$et~al.$~\cite{Kim}& SCI & $O(n)$ \\
Chung \& Lu~\cite{Chung}$\ \checkmark$ & SCI & $O(n)$ \\
Goldwasser~$et~al.$~\cite{Goldwasser}$\ \checkmark$ & SCI & $O(n)$ \\
Bernholt~$et~al.$~\cite{Bernholt2007}$\ \checkmark$ & SCI & $O(n)$ \\
Lee~$et~al.$~\cite{Lee07}$\ \checkmark$ & SCI & $O(n)$ \\
\hline
Fukuda~$et~al.$~\cite{Fukuda99}&CSI & $O(n)$  \\
Allison~\cite{Allison03} &CSI& $O(n^2)^*$ \\
Wang \& Xu~\cite{Wang03} &CSI& $O(n)$ \\
Chen \& Chao~\cite{Chen05}$\ \checkmark$ & CSI &$O(n)^{*}$ \\
Bernholt~$et~al.$~\cite{Bernholt2007}$\ \checkmark$ & CSI & $O(n)$ if $h_i\in \{0,1\}$ \& $s_i=1$ for all $i$ \\
\hline
Bernholt~$et~al.$~\cite{Bernholt2007}$\ \checkmark$ & SAI & $O(n)$  \\
\hline
Lin~\cite{Lin} &SHI& $O(n)$ \\
Fan~\cite{Fan} &SHI& $O(n)$ \\
Bernholt~$et~al.$~\cite{Bernholt2007}$\ \checkmark$ & SHI & $O(n)$  \\
\hline
\end{tabular}
\bigskip

\begin{tabular}{l}
$\checkmark$As a matter of fact, they solved more general problems than we list.\\
$^{*}$ The time bounds hold provided that the input sequence $D$ is plain.\\
\end{tabular}
\end{table}


\section{A linear time algorithm for the HCI problem}
For ease of exposition, we assume $L_h\geq0$ in subsequent
discussion. The restriction can be overcame as follows. If $L_h < 0$
and $h_i\geq 0$ for some $i$, then it is safe to reset $L_h$ to~0.
Otherwise, if $L_h < 0$ and $h_i<0$ for all $i$, let
$D'=((h'_1,s'_1),(h'_2,s'_2),\ldots,(h'_n,s'_n))=
((s_1,-h_1),(s_2,-h_2),\ldots,(s_n,-h_n))$ and $U_s = -L_h$. The
problem is then reduced to finding an index interval $I=[i,j]$ that
maximizes $\frac{\sum_{k=i}^jh'_k}{\sum_{k=i}^js'_k}$ subject to
$\sum_{k=i}^js'_k\leq U_s,$ which is solvable in $O(n)$ time in an
online manner\footnote{An algorithm is said to run in an online
manner if and only if it can process its input piece-by-piece and
maintain a solution for the pieces processed so far.} by Chung and
Lu's algorithm~\cite{Chung}.

\subsection {Preliminaries}
Let $H=(h_1,h_2,\ldots,h_n)$ and $P_H[0..n]$ be the prefix-sum array
of $H$, where $P_H[0]=0$ and $P_H[i]=P_H[i-1]+h_i$ for each
$i=1,2,\ldots,n$. Let $S=(s_1,s_2,\ldots,s_n)$ and $P_S[0..n]$ be
the prefix-sum array of $S$, where $P_S[0]=0$ and
$P_S[i]=P_S[i-1]+s_i$ for each $i=1,2,\ldots,n$. Both $P_H$ and $P_S$ can be computed in $O(n)$ time in an online manner. Note that
$hit(i,j)=P_H[i]-P_H[j-1]$, $sup(i,j)=P_S[i]-P_s[j-1]$, and
$conf(i,j) = \frac{P_H[i]-P_H[j-1]}{P_S[i]-P_s[j-1]}$. Therefore,
by keeping $P_H$ and $P_S$, each computation
of the hit-support, support, or confidence of an index interval
can be done in constant time.  We next introduce the notion of partners. For
technical reasons, we define $hit(0,p)=L_s$ and $conf(0,p)=-\infty$
for all indices~$p$.

\begin{definition}
Given an index $q$, an nonnegative integer $p$ is said to be a \textit{partner} of $q$ if and only if $hit(p,q)\geq L_h$.
\end{definition}

\begin{definition}
Given an index $q$, an integer $p$ is said to be the \textit{best partner} $\pi_q$ of $q$ if and only if $p$ is the largest partner of $q$ such that $conf(p,q)=\max\{conf(i,q): i \mbox{ is a partner of }q\}.$
\end{definition}

\begin{definition}
Denote by $r_q$ the right most partner (i.e.,
the largest partner) of index $q$. Define the \textit{ideal right most partner} of
index $q$ as $\hat{r}_q=\max\limits_{1\leq p\leq q}r_p$.
\end{definition}

\begin{definition}
An index $q$ is said to be a \textit{good index} if and only if $\hat{r}_q = r_q$.
\end{definition}

Let $q^*$ be an index that maximizes $conf(\pi_{q^*},q^*)$. If
$\pi_{q^*}=0$, then there is no endorsed interval; otherwise, index
interval $[\pi_{q^*},q^*]$ is a maximum-confidence endorsed
interval. Therefore, to solve the HCI problem, it suffices to find
index $q^*$. The next lemma enable us to eliminate bad indices from
consideration.

\begin{lemma}\label{real < ideal}
If $q$ is a bad index, then index interval $[\pi_q,q]$ is not a
maximum-confidence endorsed interval.
\end{lemma}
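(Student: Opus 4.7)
The plan is to exploit badness of $q$ to find a strictly better endorsed interval obtained by trimming a suffix from $[\pi_q,q]$. Since $\hat{r}_q>r_q$, there exists some index $p'$ with $1\leq p'\leq q$ satisfying $r_{p'}>r_q$, and necessarily $p'<q$ (else $r_{p'}=r_q$). I would nominate $[\pi_q,p']$ as the candidate improved interval and verify that it is well-defined, endorsed, and has strictly greater confidence than $[\pi_q,q]$.

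The crux is showing $hit(p'+1,q)<0$. I would sandwich $r_{p'}$ between two inequalities: on one side, $r_{p'}$ is a partner of $p'$ by definition, so $hit(r_{p'},p')\geq L_h$; on the other side, since $r_{p'}\leq p'\leq q$ but $r_{p'}>r_q$ and $r_q$ is the largest partner of $q$, the index $r_{p'}$ cannot be a partner of $q$, so $hit(r_{p'},q)<L_h$. Subtracting the two yields $hit(p'+1,q)=hit(r_{p'},q)-hit(r_{p'},p')<0$, while $sup(p'+1,q)>0$ always. This is precisely the condition needed for trimming the suffix $(p',q]$ to strictly increase confidence.

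To assemble the improvement, I would observe $\pi_q\leq r_q<r_{p'}\leq p'$ (the first inequality because $\pi_q$ is a partner of $q$ and $r_q$ is the largest), so $[\pi_q,p']$ is a valid index interval. Then $hit(\pi_q,p')=hit(\pi_q,q)-hit(p'+1,q)>hit(\pi_q,q)\geq L_h$, so $[\pi_q,p']$ is endorsed, while $sup(\pi_q,p')<sup(\pi_q,q)$. Since the numerator of the confidence strictly increases (and is nonnegative) while the positive denominator strictly decreases, $conf(\pi_q,p')>conf(\pi_q,q)$, contradicting the maximality of $[\pi_q,q]$.

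The one boundary case to handle separately is $\pi_q=0$, where $conf(\pi_q,q)=-\infty$ and the numerator-denominator argument above is vacuous. In that situation $r_q=0$ as well, since any real partner of $q$ would achieve finite confidence and thus dominate $0$ in the definition of $\pi_q$; then $\hat{r}_q>r_q=0$ furnishes some $p'\leq q$ with $r_{p'}\geq 1$, and the genuine endorsed interval $[r_{p'},p']$ has finite confidence, again strictly exceeding $-\infty$. I expect the whole argument to be essentially bookkeeping once the suffix-trimming idea is identified; the only conceptual step is recognizing that badness translates directly into a negative-hit suffix.
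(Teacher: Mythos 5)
Your proof is correct and follows essentially the same route as the paper: pick $p'<q$ with $r_{p'}>r_q$, use $hit(r_{p'},p')\geq L_h>hit(r_{p'},q)$ to conclude the suffix $(p',q]$ has negative hit-sum, and trim it to obtain an endorsed interval of strictly larger confidence (using $L_h\geq 0$). Your write-up is in fact a bit more careful than the paper's, which compresses the suffix argument into the shorthand ``$h_p>h_q$'' and does not separately treat the degenerate case $\pi_q=0$.
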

\begin{proof}
If $q$ is a bad index, then there exist $p<q$ such that $r_q<
\hat{r}_q =r_p$. Since $hit(p,\hat{r}_q)\geq L_h$ and
$hit(q,\hat{r}_q) < L_h$, we have $h_p > h_q$. It follows that if
index interval $[\pi_q,q]$ is an endorsed interval, then index
interval $[\pi_q,p]$ is an endorsed interval and
$conf(\pi_q,p)>conf(\pi_q,q).$ Thus, index interval $[\pi_q,q]$ is
impossible to be a maximum-confidence endorsed interval.
\end{proof}

\subsection {Subroutine}
We next give a subroutine to compute $r_q$ for all good indices $q$
in $O(n)$ time. The pseudocode is given below, where the goal is to
fill in an array $R[1..n]$, initialized with -1's, such that
$R[i]=r_i$ for all good indices $i$ and $R[i]=-1$ for all bad
indices $i$ at the end.

\begin{description}
\item[Subroutine]\textsc{RMP}\\
\end{description}
\vspace{-34pt}
    \begin{algorithmic}[1]\itemsep = -3pt
         \STATE create an array $R[1..n]$ initialized with -1's;
         \STATE $\hat{R}\leftarrow 0$;
                \STATE create an empty list $C$;
               \FOR{$i\leftarrow$ 1 to $n$}
                \WHILE{$C$ is not empty and $hit(C.lastElement()+1,i)\leq 0$}
                       \STATE delete from $C$ its last element;
                \ENDWHILE
                    \STATE insert $i$ at the end of $C$;

                \IF{$hit(\hat{R},i)\geq L_h$ or $hit(C.firstElement(),i)\geq L_h$}
                   \WHILE{$C$ is not empty and $hit(C.firstElement(),i)\geq L_h$}
                    \STATE $\hat{R}\leftarrow C.firstElement();$
                    \STATE delete from $C$ its first element;
                   \ENDWHILE
                   \STATE $R[i]\leftarrow \hat{R};$
                \ENDIF
                \ENDFOR
                \STATE output $R[1..n]$;
                \end{algorithmic}

\vspace{10pt} The Subroutine \textsc{RMP} consists of $n$
iterations, in the $i^{th}$ iteration, $R[i]$ is reset to $r_i$ if
$i$ is an good index. To accomplish this task efficiently, we
maintain a list $C$ and a variable $\hat{R}$ such that at the end of
the $i^{th}$ iteration the following conditions hold. \vspace{-2pt}
\begin{enumerate}\itemsep=-3pt
\item[1.] For each two adjacent elements $p<q$ in $C$, $hit(p+1,q)>0.$
\item[2.] For any index $q\geq i$, if $r_q\in[\hat{R},i]$, then $r_q\in C\cup\{\hat{R}\}.$
\item[3.] $\hat{R}=\hat{r}_i$.
\end{enumerate}
\vspace{-2pt} It is clear that $R[1]=r_1=0$ and the three conditions
hold at the end of the first iteration of the for-loop. Suppose that
the three conditions hold at the end of the $(i-1)^{th}$ iteration
of the for-loop. We shall prove that $R[i]$ will be reset to $r_i$
in the $i^{th}$ iteration of the for-loop if and only if $i$ is a
good index and the three conditions hold at the end of the $i^{th}$
iteration of the for-loop. Consider the moment immediately before
the execution of line $9$ in the $i^{th}$ iteration of the for-loop.
It is clear that condition~1 still holds at this moment. We next
prove that condition~2 also holds at this moment. Suppose for
contradiction that some $r_q$, where $q\geq i$, is removed from $C$
during the execution of the while-loop of lines $4\sim7.$ A
necessary condition for $r_q$ to be deleted is $hit(r_q+1,i)\leq 0$.
It follows that $hit(i,q)\geq hit(r_q,q)\geq L_h$, so $r_q$ is not
the right most partner of $q$, a contradiction. Therefore, we have
conditions~1~and~2 hold and $\hat{R}=\hat{r}_{i-1}$ just before the
execution of line 9. We next examine the execution of lines
$9\sim15$. Consider the following three cases.

Case~1: $r_i\not\in C\cup\{\hat{R}\}$ just before the execution of line 9. By condition~2, we have $r_i<\hat{r}_{i-1}$, so index $i$ is not good and  $\hat{r}_{i-1}=\hat{r}_i$. Thus, condition~3 holds just before the execution of line~9. In this case we will fail the test condition in line~9 so lines$~10\sim14$ will not be executed. Therefore, $R[i]$ is not reset, and conditions~$1\sim3$ hold at the end of the $i^{th}$ iteration of the for-loop.

Case~2: $r_i=\hat{R}$ just before the execution of line~9. It follows that index $i$ is good and $\hat{r}_{i-1} = r_{i} = \hat{r}_{i}$. Thus, condition~3 holds just before the execution of line~9. The body of the while-loop of lines $10\sim13$ will not be executed in this case. Therefore, $R[i]$ is reset to $\hat{R}=r_i$ in line 14, and conditions$~1\sim3$ hold at the end of the $i^{th}$ iteration of the for-loop.

Case~3: $r_i\in C$ just before the execution of line~9. It follows that $r_i>\hat{r}_{i-1}$, so $\hat{r}_i=r_i$ and index $i$ is good.
By conditions~1, for all $c\in C$ before $r_i$, $hit(c,i)> hit (r_i,i)\geq L_h$. Moreover, since $r_i$ is the right most partner of $i$ and indices in $C$ are in increasing order, for all $c'\in C$ after $r_i$, $hit(c',i)<L_h$. Therefore, we have $\hat{R}=r_i$ holds after the execution of the while-loop of lines $10\sim13$. It follows that condition~3 holds at the end of the $i^{th}$ iteration of the for-loop since $r_i = \hat{r}_i$. It is clear that deleting from $C$ a prefix has no harm to condition~1, so it remains to prove that condition~2 still holds. Suppose for contradiction that condition~2 does not hold at the end of the $i^{th}$ iteration of the for-loop. It follows that some $r_q$ with $r_i<r_q\leq i$ is removed from $C,$ which leads to a contradiction because $r_i$ is the largest index removed from $C$ in the while-loop of lines $10\sim13$.

We next analyze the running time. In each iteration of the while-loop of lines $5\sim7$ and the while-loop of lines $10\sim13$, there is one index in $C$ removed. Since each index is inserted into $C$ at most once, the time spent on these two while-loops is bounded by $O(n)$. To summarize, we have the following lemma.

\begin{lemma}\label{LEMMA: RMP}
Subroutine \textsc{RMP} computes in $O(n)$ time an array $R[1..n]$
such that $R[i]$ is the right most partner of $i$ if $i$ is a good
index and $R[i]=-1$ if $i$ is a bad index.
\end{lemma}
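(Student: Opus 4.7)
The plan is to prove correctness via a loop invariant, then bound the running time by an amortized argument. Specifically, I would show by induction on $i$ that at the end of the $i$-th iteration of the for-loop the three conditions introduced just before the lemma are maintained, namely: (1) for any two adjacent elements $p<q$ in $C$, $hit(p+1,q)>0$; (2) for any index $q\geq i$, if $r_q\in[\hat R,i]$, then $r_q\in C\cup\{\hat R\}$; and (3) $\hat R=\hat r_i$. The base case $i=1$ is essentially immediate from the initialization $\hat R=0$ and $C=\emptyset$ together with $r_1\in\{0,1\}$, so the bulk of the work is the inductive step.

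For the inductive step, I would separate the execution of iteration $i$ into two phases: the ``pop from the back'' phase (lines 5--7, which trims tail elements of $C$ whose hit-contribution is non-positive) and the ``advance $\hat R$'' phase (lines 9--15). For the first phase, the key claim is that no $r_q$ with $q\geq i$ is ever removed: if lines 5--7 removed such a $c=r_q$, then $hit(c+1,i)\leq 0$ would give $hit(i,q)\geq hit(c,q)\geq L_h$, making $i$ a partner of $q$ strictly larger than $r_q$, a contradiction. This preserves invariant (2); invariant (1) is obviously preserved by trimming the tail; and invariant (3) is inherited from the previous iteration because $\hat R$ is untouched here.

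For the second phase I would do the three-case analysis corresponding to where $r_i$ sits relative to $\hat R$ and $C$, as in the text leading up to the lemma. If $r_i\notin C\cup\{\hat R\}$ then invariant (2) at the start of the phase forces $r_i<\hat R$, so $i$ is bad, $\hat r_i=\hat r_{i-1}=\hat R$, and the guard of line 9 fails (both $hit(\hat R,i)<L_h$ and $hit(C.first,i)<L_h$), leaving $R[i]=-1$ and the invariants intact. If $r_i=\hat R$ then $i$ is good, the guard succeeds via the first disjunct, the inner while on lines 10--13 does nothing (by invariant (1) no element of $C$ can be a partner of $i$ since $\hat R$ already is the rightmost), and $R[i]$ is correctly set. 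If $r_i\in C$ then by invariant (1) every element of $C$ before $r_i$ is also a partner of $i$ while every element after is not, so the inner while moves $\hat R$ to exactly $r_i$, and invariant (2) is preserved because the discarded elements cannot be any $r_q$ with $q>i$ (such an $r_q$ would exceed $r_i\geq\hat r_i$, contradicting $\hat r_i=\max_{p\leq i}r_p$).

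Having established the invariants, correctness of the claim ``$R[i]=r_i$ iff $i$ is good'' follows directly from the case analysis. For the running time, I would use the standard amortization: each index is inserted into $C$ at most once (in line 8), each iteration of both inner while-loops removes one element from $C$, and each access to $hit(\cdot,\cdot)$ is $O(1)$ given the prefix-sum arrays $P_H,P_S$ of the preamble; hence total work summed over all iterations is $O(n)$. The main obstacle I anticipate is the careful bookkeeping in case 3, where one must rule out the accidental deletion of some future $r_q$ while popping the prefix of $C$; the argument $r_q\geq\hat r_i=r_i$ for $q>i$ is the crucial observation that closes this gap.
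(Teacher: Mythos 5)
Your proposal is correct and follows essentially the same route as the paper: the same three loop invariants maintained by induction over the for-loop, the same argument that lines 5--7 cannot discard any $r_q$ with $q\geq i$, the same three-case analysis of where $r_i$ sits relative to $\hat R$ and $C$, and the same amortized $O(n)$ charging of list insertions and deletions. No gaps beyond the level of detail the paper itself leaves implicit.
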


Denote by $\phi(x,y)$ the largest $z$ in $[x,y]$ that minimizes
$conf(x,z)$. We next describe a subroutine which finds the largest
$p\in [l,u]$ that maximizes $conf(p,q)$ given $0\leq l\leq u \leq q$. The
pseudocode is given below, where we initialize a variable $p$ with
value $l$ and then repeatedly resetting $p$ to $\phi(p,u-1)+1$ until
$p=u$ or the lowest confidence interval starting from $p$ and ending
before $u$ has the same confidence as interval $[p,q]$. The
pseudocode is given below.

\begin{description}
\item[Subroutine]BEST$(l,u,q)$\\
\end{description}
\vspace{-34pt}
    \begin{algorithmic}[1]\itemsep = -3pt
            \STATE $p\leftarrow l;$
            \WHILE{$p<u$ and $conf(p,\phi(p,u-1))\leq conf(p,q)$}
              \STATE $p\leftarrow \phi(p,u-1)+1$;
            \ENDWHILE
              \STATE output $p$;
    \end{algorithmic}

\begin{lemma}\label{LEMMA: BEST}~\cite{Chung}
The call to BEST$(l,u,q)$ will return the largest $p\in [l,u]$ that
maximizes $conf(p,q)$ if $0\leq l \leq u \leq q$.
\end{lemma}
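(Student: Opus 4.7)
\medskip
\noindent
\textbf{Proof plan.} The plan is to exploit the standard mediant identity for confidence: for any $p\le p'<q$,
\begin{equation*}
conf(p,q)=\frac{hit(p,p')+hit(p'+1,q)}{sup(p,p')+sup(p'+1,q)},
\end{equation*}
which, since $sup(p,p'),sup(p'+1,q)>0$, must lie between $conf(p,p')$ and $conf(p'+1,q)$. From this I will repeatedly derive the following monotone consequence: if $p\le p'\le u-1\le q-1$ and $conf(p,p')\le conf(p,q)$, then $conf(p,q)\le conf(p'+1,q)$, and moreover for every $p''\in[p,p']$, writing $conf(p'',q)$ as the mediant of $conf(p'',p')$ and $conf(p'+1,q)$, I can bound $conf(p'',p')\le conf(p,p')$ (by viewing $conf(p,p')$ itself as a mediant of $conf(p,p''-1)$ and $conf(p'',p')$ and using that $\phi(p,u-1)=p'$ minimizes $conf(p,\cdot)$ over $[p,u-1]$, so $conf(p,p''-1)\ge conf(p,p')$). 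Chaining the inequalities yields $conf(p'',q)\le conf(p'+1,q)$ for every $p''\in[p,p']$, with the inequalities strict unless $conf(p,p')=conf(p,q)$, in which case $p'+1$ still ties and is strictly larger.

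The main step is then the loop invariant: at the start of each iteration, the \emph{largest} maximizer of $conf(\cdot,q)$ over $[l,u]$ lies in $[p,u]$. Initially $p=l$, so this is trivial. For the inductive step, when the while-condition holds with $p'=\phi(p,u-1)$, the paragraph above shows every $p''\in[p,p']$ is dominated by $p'+1$ (strictly, or tied with a larger index), so the largest maximizer lies in $[p'+1,u]$, which is exactly the new range after the update $p\leftarrow p'+1$.

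It remains to analyze loop termination. If the loop exits with $p=u$, the invariant forces $u$ itself to be the largest maximizer, and subroutine returns $u$ correctly. Otherwise the loop exits because $conf(p,\phi(p,u-1))>conf(p,q)$; setting $p'=\phi(p,u-1)$ and applying the mediant identity gives $conf(p'+1,q)<conf(p,q)$. For any $p''\in[p+1,u]$, the fact that $p''-1\in[p,u-1]$ together with $\phi(p,u-1)=p'$ gives $conf(p,p''-1)\ge conf(p,p')>conf(p,q)$, so the mediant expression $conf(p,q)=\frac{hit(p,p''-1)+hit(p'',q)}{sup(p,p''-1)+sup(p'',q)}$ forces $conf(p'',q)<conf(p,q)$; hence $p$ is the unique, and therefore largest, maximizer in $[p,u]$, and by the invariant also in $[l,u]$.

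The main obstacle is keeping the mediant case analysis clean, especially the subtle point that ties ($conf(p,p')=conf(p,q)$) must translate into \emph{equality} on the right endpoint $conf(p'+1,q)=conf(p,q)$ so that the $\le$ test in the while-loop still correctly advances $p$ to the largest maximizer rather than stopping at a tied smaller index.
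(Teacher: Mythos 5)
The paper offers no proof of this lemma; it is imported verbatim from Chung and Lu~\cite{Chung}, so there is nothing in-paper to compare against. Your argument --- the mediant (weighted-average) identity, the invariant that the largest maximizer over $[l,u]$ stays in $[p,u]$, and the two exit cases --- is the standard proof of this fact, and every step checks out for genuine indices: the tie case $conf(p,p')=conf(p,q)$ correctly forces $conf(p'+1,q)=conf(p,q)$, so the $\leq$ test advances $p$ past tied smaller maximizers; the minimality of $\phi(p,u-1)$ over $[p,u-1]$ gives both $conf(p'',p')\leq conf(p,p')$ in the advancing case and $conf(p'',q)<conf(p,q)$ for all $p''\in[p+1,u]$ in the terminating case; and termination is clear since $p$ strictly increases.

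One caveat you should address explicitly: the hypothesis allows $l=0$, and Algorithm \textsc{ComputeHCI} does invoke BEST with $l=0$, but the paper's sentinel conventions ($hit(0,p)=L_s$ and $conf(0,p)=-\infty$) cannot be realized by any actual pair $(h_0,s_0)$ with $s_0>0$, so the mediant identity on which your entire proof rests does not apply when $p=0$. Taken literally, $conf(0,\cdot)\equiv-\infty$ gives $\phi(0,u-1)=u-1$ and the call returns $u$ after one iteration, which need not be the largest maximizer of $conf(\cdot,q)$ over $[0,u]$. Either prove the lemma for $l\geq 1$ and dispose of the sentinel index separately, or state up front that the convention is to be realized by a concrete dummy pair for which $conf$ remains a true positive-denominator ratio (as in Chung and Lu's original construction); as written, your proof silently assumes the latter.
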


\begin{lemma}\label{LEMMA: BEST2}
Let $p$ be the return value of the call to BEST$(l,r_q,q)$. Then $p=\pi_q$ if
$\pi_q\in[l,r_q]$ and $0\leq l \leq r_q$.
\end{lemma}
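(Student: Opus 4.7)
The plan is to deduce the claim from Lemma~\ref{LEMMA: BEST} by showing that the best partner $\pi_q$ coincides with the unconstrained maximizer of $conf(\cdot,q)$ over the full interval $[l,r_q]$. First I would invoke Lemma~\ref{LEMMA: BEST} with $u := r_q$, which is legal since $r_q \le q$. That lemma then tells us BEST$(l,r_q,q)$ returns the largest index $p^* \in [l,r_q]$ attaining $M := \max_{p \in [l,r_q]} conf(p,q)$, and the task reduces to proving $p^* = \pi_q$.

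The heart of the proof is an auxiliary claim: every non-partner $p \in [l,r_q]$ strictly loses in confidence to $r_q$, that is, $conf(p,q) < conf(r_q,q)$. This is where the hypothesis $L_h \ge 0$ imposed at the beginning of this section is essential. To establish it I would write $hit(p,r_q-1) = hit(p,q) - hit(r_q,q)$; since $p$ is a non-partner and $r_q$ is a partner, this quantity is strictly negative, while $hit(r_q,q) \ge L_h \ge 0$. Cross-multiplying the target inequality yields
\[
hit(p,r_q-1)\cdot sup(r_q,q) \;<\; 0 \;\le\; hit(r_q,q)\cdot sup(p,r_q-1),
\]
and rearranging gives $conf(p,q) < conf(r_q,q)$. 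Because $r_q$ itself lies in $[l,r_q]$ and is a partner, the claim forces every maximizer of $conf(\cdot,q)$ on $[l,r_q]$, and in particular $p^*$, to be a partner of $q$.

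Once $p^*$ is known to be a partner, the identification $p^* = \pi_q$ falls out from the definitions: $\pi_q \in [l,r_q]$ gives $conf(\pi_q,q) \le M$; the maximality of $\pi_q$ among partners gives $M = conf(p^*,q) \le conf(\pi_q,q)$; so the two confidences are equal, and since $p^*$ is the largest index in $[l,r_q]$ attaining $M$ while $\pi_q$ is the largest partner attaining this same value, we conclude $p^* = \pi_q$. I expect the only non-routine step to be the algebraic auxiliary claim; everything else is straightforward bookkeeping against Lemma~\ref{LEMMA: BEST} and the definitions of $r_q$ and $\pi_q$.
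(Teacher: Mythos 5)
Your proof is correct and follows essentially the same route as the paper's: apply Lemma~\ref{LEMMA: BEST}, rule out non-partners by comparing against $conf(r_q,q)$, and then invoke the maximality in the definition of $\pi_q$. The only difference is that you spell out the algebra (using $L_h\geq 0$ and the decomposition at $r_q$) behind the implication $hit(p,q)<L_h\leq hit(r_q,q)\Rightarrow conf(p,q)<conf(r_q,q)$, which the paper asserts without detail.
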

\begin{proof}
Suppose that $p$ is not a partner of $q$, i.e., $hit(p,q)<L_h\leq
hit(r_q,q)$. It follows that $conf(p,q)<conf(r_q,q)$, which
contradicts Lemma~\ref{LEMMA: BEST}. Thus, $p$ must be a partner
of $q$. Suppose for contradiction that $p\neq\pi_q$. Since $p$ is a
partner of $q$, by the definition of best partners, we have either
($conf(p,q)<conf(\pi_q,q)$) or ($conf(p,q)=conf(\pi_q,q)$ and
$p<\pi_q$), which contradicts Lemma~\ref{LEMMA: BEST}.
\end{proof}

Chung and Lu~\cite{Chung} also gave efficient implementations of Subroutine BEST in their paper, which directly implies the next lemma.

\begin{lemma}\label{LEMMA: IMPLEMENTATION}~\cite{Chung}
   A sequence of consecutive calls to Subroutine BEST, say BEST$(l_1,u_1,q_1)$,
   BEST$(l_2,u_2,q_2)$,$\ldots$, BEST$(l_k,u_k,q_k)$,
   can be completed in total $O(u_k+k)$ time
   provided that $l_1=0$, $l_i =$ BEST$(l_{i-1},u_{i-1},q_{i-1})$ for each $i = 2,3,\ldots, k$, $u_1\leq u_2 \leq\cdots \leq u_k$, and $u_i \leq q_i$ for each $i = 1,2,\ldots, k$.
\end{lemma}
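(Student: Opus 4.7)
The plan is to interpret the problem geometrically and implement Subroutine BEST with a single shared deque storing a lower convex hull across all calls. For each $j=0,1,\ldots,n$, let $\mathbf{q}_j=(P_S[j],P_H[j])$; since $s_i>0$, the points $\mathbf{q}_0,\mathbf{q}_1,\ldots$ are strictly increasing in $x$-coordinate, and $conf(x,z)$ is exactly the slope of the segment from $\mathbf{q}_{x-1}$ to $\mathbf{q}_z$. Hence $\phi(x,y)$ is the index of the second vertex of the lower convex hull of $\{\mathbf{q}_{x-1},\mathbf{q}_x,\ldots,\mathbf{q}_y\}$, with collinear ties broken in favor of the largest index. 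With this view, the task reduces to maintaining the lower hull of the window $\{\mathbf{q}_{p-1},\mathbf{q}_p,\ldots,\mathbf{q}_{u-1}\}$ under two monotone operations: appending a new point on the right, and advancing the leftmost point.

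I would realize this with a deque $\mathcal{H}$ of hull indices and handle the two operations in the standard amortized-$O(1)$ way. When $u$ grows at the start of a new call, I append each new index, popping back-of-deque entries while they violate lower-hull convexity. Inside the while-loop of BEST, the update $p \leftarrow \phi(p,u-1)+1$ means the new leftmost point is precisely the old second hull vertex $\mathbf{q}_{\phi(p,u-1)}$, so I pop the front of $\mathcal{H}$ exactly once per iteration. The value $\phi(p,u-1)$ and the two confidences to be compared are then read off in $O(1)$ via the second element of $\mathcal{H}$ and the prefix-sum arrays $P_H$ and $P_S$. Correctness of the front-pop rests on a standard fact about sorted point sets: if $P_{i_0},P_{i_1},\ldots,P_{i_m}$ is the lower hull of an $x$-sorted sequence, then the lower hull of the suffix starting at $P_{i_k}$ is exactly $P_{i_k},P_{i_{k+1}},\ldots,P_{i_m}$, since the lower envelope of a superset lies weakly below that of a subset and the non-hull points in each gap $(i_l,i_{l+1})$ were already strictly above the corresponding edge.

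The time bound then follows by a direct amortization. Across the whole sequence of calls, every index in $\{0,1,\ldots,u_k\}$ is pushed onto $\mathcal{H}$ at most once and popped at most once, either from the back during hull maintenance or from the front during a $p$-advance, accounting for $O(u_k)$ work in total. Each of the $k$ calls contributes an additional $O(1)$ for the initial loop test that may terminate the while-loop without advancing $p$, yielding the $O(k)$ term. The hypotheses $l_1=0$, $l_i =$ BEST$(l_{i-1},u_{i-1},q_{i-1})$, and $u_1 \leq u_2 \leq \cdots \leq u_k$ are exactly what makes both $p$ and $u$ monotone nondecreasing across calls, which is indispensable for the shared-deque argument. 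The main obstacle I foresee is justifying the front-pop step --- that after $p$ jumps over several indices at once the remaining deque still encodes the correct lower hull for the next query --- but once the suffix-of-hull fact above is established, the $O(u_k+k)$ bound is immediate.
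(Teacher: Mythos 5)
The paper does not actually prove this lemma itself --- it is imported directly from Chung and Lu~\cite{Chung} --- and your deque-based lower-convex-hull implementation, with the suffix-of-hull fact justifying the single front pop per iteration of the inner while-loop and the push-once/pop-once amortization yielding $O(u_k+k)$, is precisely the mechanism behind their implementation of BEST, so your reconstruction is correct and takes the same route as the cited source. The only loose end is the boundary convention $conf(0,\cdot)=-\infty$ (there is no point $\mathbf{q}_{-1}$ in your geometric picture), which needs a one-line special case at the front of the deque but does not affect the time bound.
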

\newpage
\subsection{Algorithm}
Our algorithm for the HCI problem is as follows. First, we
initialize a variable $l$ with value 0 and call Subroutine
\textsc{RMP} to compute an array $R[1..n]$ such that $R[i]=r_i$ if
$i$ is a good index and $R[i]=-1$ if $i$ is a bad index. Then, for
each good index $q$, taken in increasing order, call
BEST$(l,R[q],q)$ to compute the largest $l_q\in[l,R[q]]$ that
maximizes $conf(l_q,q)$ and reset variable $l$ to $l_q$. Finally,
the index interval $[l_q,q]$ that maximizes $conf(l_{q},q)$ is
returned. The pseudocode is given below.

\begin{description}
\item[Algorithm]\textsc{ComputeHCI}\itemsep=-23pt\\
\item[Input:] A sequence $D=((h_1,s_1), (h_2,s_2), \ldots, (h_n,s_n))$ of number pairs, where $s_i>0$ for all $i$,
    and a hit-support lower bound $L_h$.\\
\item[Output:] An index interval $I=[i,j]$ maximizing $conf(i,j)$ subject to $hit(i,j)\geq L_h$.\\
\vspace{-26pt}
\end{description}
  \begin{algorithmic}[1]
           \STATE $l\leftarrow 0$;
           \STATE $R\leftarrow$ call Subroutine RMP;
           \STATE $c_{max}\leftarrow -\infty$;
           \STATE $(\alpha,\beta) = (0,0)$;
           \FOR{$q\leftarrow 1$ to $n$}
                \IF{$R[q]\neq -1$}
                    \STATE $l\leftarrow$ BEST$(l,R[q],q)$;
                     \IF{$conf(l,q)>c_{max}$}
                     \STATE $c_{max} \leftarrow conf(l,q);$
                     \STATE $(\alpha,\beta)\leftarrow (l,q)$;
                    \ENDIF
                \ENDIF
           \ENDFOR
           \STATE output $[\alpha,\beta]$;
    \end{algorithmic}

\begin{theorem}
 Algorithm \textsc{ComputeHCI} solves the HCI problem in $O(n)$ time.
\end{theorem}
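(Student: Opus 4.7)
The plan is to prove correctness via an inductive invariant on the variable $l$, and then bound the running time using the already-established lemmas. By Lemma~\ref{real < ideal}, every maximum-confidence endorsed interval is of the form $[\pi_q,q]$ with $q$ a good index, so the for-loop in Algorithm \textsc{ComputeHCI} only needs to consider indices $q$ with $R[q]\neq -1$; by Lemma~\ref{LEMMA: RMP} these are precisely the good indices, and for them $R[q]=r_q$. The goal of iteration $q$ is therefore to compute $\pi_q$ via the call $\textsc{BEST}(l,R[q],q)$ and to update the running pair $(\alpha,\beta)$ whenever $conf(l,q)$ exceeds $c_{max}$.

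The invariant I would maintain is: immediately before the call $\textsc{BEST}(l,R[q],q)$ for a good index $q$, we have $l=\pi_{q'}$, where $q'$ is the most recent good index with $q'<q$ (and $l=0$ if no such $q'$ exists). By Lemma~\ref{LEMMA: BEST2} the call then returns $\pi_q$ provided $\pi_q\in[l,r_q]$. The upper bound $\pi_q\leq r_q$ is immediate. The lower bound $\pi_q\geq \pi_{q'}$ is the key monotonicity of best partners over good indices, and I would prove it by contradiction: assume $a:=\pi_{q'}>b:=\pi_q$. First, if $b=0$, then $conf(0,q)=-\infty$ forces $r_q=0$; monotonicity of $\hat r$ together with goodness of $q'$ then gives $r_{q'}=0$ and hence $a=0$, contradicting $a>b$. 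For $a>b\geq 1$ I would exploit the support-weighted-average identity that $conf(x,z)$ is the convex combination of $conf(x,y)$ and $conf(y+1,z)$ with weights $sup(x,y)$ and $sup(y+1,z)$. Combining the inequalities forced by $a=\pi_{q'}$ and $b=\pi_q$ (with strict inequality arising from the ``largest among ties'' tiebreaker in the definition of $\pi$) with the hit-support bounds $hit(a,q')\geq L_h$ and $hit(b,q)\geq L_h$, a case analysis on whether $b$ is a partner of $q'$ and whether $a$ is a partner of $q$ then yields a contradiction in each case.

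With correctness in place, the running-time analysis is short. Computing $P_H$ and $P_S$ takes $O(n)$ time, Subroutine \textsc{RMP} runs in $O(n)$ time by Lemma~\ref{LEMMA: RMP}, and the sequence of calls $\textsc{BEST}(l_{i-1},r_{q_i},q_i)$ taken over the good indices $q_1<q_2<\cdots<q_k$ satisfies every hypothesis of Lemma~\ref{LEMMA: IMPLEMENTATION}: the first left endpoint is $0$, each later left endpoint is the output of the previous call, $u_i=r_{q_i}\leq q_i$, and $u_1\leq u_2\leq\cdots\leq u_k$ because goodness gives $u_i=r_{q_i}=\hat r_{q_i}$, which is non-decreasing in $i$. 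Hence all \textsc{BEST} calls together cost $O(u_k+k)=O(n)$, and the remaining per-iteration bookkeeping is $O(1)$, giving $O(n)$ in total.

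The hard part is the monotonicity step in the second paragraph. Chung and Lu's SCI algorithm uses a similar weighted-average trick, but because the HCI constraint is on hit-support rather than support, the predicate ``$p$ is a partner of $q$'' is not monotone in $p$: $b$ might fail to be a partner of $q'$ even though it is a partner of $q$, and symmetrically $a$ might fail to be a partner of $q$. Each such failure has to be ruled out or exploited separately, using the right-most-partner structure and the goodness of $q$, which is what makes the monotonicity argument non-routine.
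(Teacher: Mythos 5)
Your reduction to good indices (via Lemma~\ref{real < ideal} and Lemma~\ref{LEMMA: RMP}) and your running-time analysis (via Lemma~\ref{LEMMA: IMPLEMENTATION}, using $r_{q_i}=\hat r_{q_i}$ to get monotone right endpoints) match the paper and are fine. The problem is the correctness invariant. You propose to maintain ``$l=\pi_{q'}$ for the most recent good index $q'$'' by proving that $\pi$ is monotone over good indices, i.e.\ $\pi_{q'}\leq\pi_q$ for good $q'<q$. That claim is false. Take the plain sequence with $H=(2,0,2,0,0)$ and $L_h=1$. Then $r_1=1$, $r_2=1$, $r_3=r_4=r_5=3$, so every index is good. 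The partners of $q'=4$ are $\{1,2,3\}$ with confidences $1,\,2/3,\,1$, and the tie is broken toward the largest, so $\pi_4=3$. The partners of $q=5$ are again $\{1,2,3\}$ but now with confidences $4/5,\,1/2,\,2/3$, so $\pi_5=1<3=\pi_4$. (Intuitively: appending low values shifts the weighted average in favor of the \emph{longer} of two equally confident prefixamong candidates, exactly the failure mode you identified in your own final paragraph but hoped to rule out by case analysis.) Consequently your invariant is not maintained by the algorithm either: on this input the loop reaches $q=5$ with $l=3$, and $\textsc{BEST}(3,3,5)$ returns $3\neq\pi_5$. So the lemma at the heart of your argument cannot be proved, and the invariant ``$l$ equals the best partner of the last good index'' is simply not a property of \textsc{ComputeHCI}.

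The paper's proof is weaker in exactly the right way: it never claims $l_q=\pi_q$ for all good $q$, only that $l_{q_{i^*}}=\pi_{q_{i^*}}$ at the single good index $q_{i^*}$ maximizing $conf(\pi_{q_{i^*}},q_{i^*})$. To show $\pi_{q_{i^*}}\geq l_{q_{i^*-1}}$ it takes the first good index $q_t$ whose \textsc{BEST} call jumps past $\pi_{q_{i^*}}$ and derives a chain of inequalities whose pivotal step (``for otherwise $conf(\pi_{q_t},q_t)\leq conf(\pi_{q_{i^*}},q_{i^*})<conf(l_{q_t},q_t)$'') uses the \emph{global optimality} of $q_{i^*}$ among good indices; the contradiction is with the definition of $\pi_{q_{i^*}}$, not with any monotonicity of $\pi$. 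In the example above this is consistent: $l_5\neq\pi_5$, but $[\pi_5,5]$ has confidence $4/5$, far from optimal, so correctness is unharmed. To repair your proof you would need to replace the unconditional monotonicity lemma with a statement conditioned on optimality, at which point you are essentially reconstructing the paper's argument.
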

\begin{proof}
We first prove the correctness. Let $Q=\{q_1,q_2,\ldots,q_k\}$ be
the set of good indices, where $q_1<q_2\cdots<q_k$ and $k=|Q|$. Let
$l_{q_0}=0$ and $l_{q_i}$ be the return value of the call to
Subroutine BEST in the $q_i^{th}$ iteration of the for-loop,
$i=1,2,\ldots,k$. Note that $l_{q_i}$ is the largest integer in
$[l_{q_{i-1}},r_{q_i}]$ that maximizes $conf(l_{q_i},q)$ for all $i$
with $1\leq i\leq k$. Let $q_{i^*}$ be the good index that maximizes
$conf(\pi_{q_{i^*}},q_{i^*})$. By Lemma~\ref{real < ideal}, it
suffices to prove that $\pi_{q_{i^*}}=l_{q_{i^*}}.$ To prove
$\pi_{q_{i^*}}=l_{q_{i^*}}$, by Lemma~\ref{LEMMA: BEST2}, it
suffices to prove that $\pi_{q_{i^*}}\in
[l_{q_{i^*-1}},r_{q_{i^*}}],$ i.e., $\pi_{q_{i^*}}\geq
l_{q_{i^*-1}}$. Suppose for contradiction that $\pi_{q_{i^*}}<
l_{q_{i^*-1}}$. Let $q_t\leq q_{i^*-1}$ be the first good index such
that $\pi_{q_{i^*}}< l_{q_{t}}.$ Then we have $\pi_{q_{i^*}}\in
[l_{q_{t-1}},r_{q_{t}}]$ and by Lemma~\ref{LEMMA: BEST}, $l_{q_t}$
is the largest index in $[l_{q_{t-1}},r_{q_t}]$ that maximizes
$conf(l_{q_t},q)$. It follows that
\begin{equation}conf(\pi_{q_{i^*}}, l_{q_{t}}-1)\leq conf(l_{q_{t}},
q_t).\end{equation} Since $conf(l_{q_{t}},q_t)\geq
conf(r_{q_{t}},q_t)\geq \frac{L_h}{sup(r_{q_{t}},q_t)}\geq 0$ and
$sup(l_{q_{t}},q_t)\geq sup(r_{q_{t}},q_t)$, we have
$hit(l_{q_{t}},q_t)\geq hit(r_{q_{t}},q_t)\geq L_h$. Therefore,
$l_{q_{t}}$ is a partner of $q_t$ and we have
\begin{equation}
conf(l_{q_{t}}, q_t)\leq conf(\pi_{q_{t}},q_{t}).\end{equation}
Following from inequality (1), we have $conf(l_{q_{t}}, q_t)\leq
conf(q_{t}+1,q_{i^*})$ for otherwise $conf(\pi_{q_{t}},q_{t})\leq
conf(\pi_{q_{i^*}},q_{i^*})< conf(l_{q_{t}}, q_t)$, which
contradicts inequality~(2). Combining inequality~(1) with
$conf(l_{q_{t}}, q_t)\leq conf(q_{t}+1,q_{i^*})$, we have
$conf(\pi_{q_{i^*}}, l_{q_{t}}-1)\leq conf(l_{q_{t}}, q_t)\leq
conf(q_{t}+1,q_{i^*}).$ It follows that
\begin{equation}conf(\pi_{q_{i^*}},q_{i^*})\leq
conf(l_{q_{t}},q_{i^*}).\end{equation} By inequality~(3),
$0\leq\frac{L_h}{sup(r_{q_{i^*}},q_{i^*})}\leq
conf(r_{q_{i^*}},q_{i^*})$, and $conf(r_{q_{i^*}},q_{i^*})\leq
conf(\pi_{q_{i^*}},q_{i^*})$, we have
\begin{equation}
0\leq conf(r_{q_{i^*}},q_{i^*})\leq conf(l_{q_{t}},q_{i^*}).
\end{equation}
Since $l_{q_t}\leq r_{q_t} \leq r_{q_{i^*}}\leq q_{i^*} $, we have
$sup(r_{q_{i^*}},q_{i^*})\leq sup(l_{q_t},q_{i^*}).$ By
inequality~(4) and $sup(r_{q_{i^*}},q_{i^*})\leq
sup(l_{q_t},q_{i^*})$, we have
\begin{equation}
L_h\leq hit(r_{q_{i^*}},q_{i^*})\leq hit(l_{q_{t}},q_{i^*}).
\end{equation}
By inequalities~(3) and (5), $l_{q_t}$ is a partner of $q_{i^*}$ and
$conf(\pi_{q_{i^*}},q_{i^*})\leq conf(l_{q_{t}},q_{i^*})$, which
contradicts the definition of best partners.

We now analyze the time complexity. By Lemma~\ref{LEMMA: RMP}, the
call to \textsc{RMP} takes $O(n)$ time. By the definition of good
indices, we have $R[q_1]\leq R[q_2]\leq \cdots \leq R[q_k]$. Because
we also have $l_{q_i} =$ BEST$(l_{q_{i-1}},R[q_i],q_i)$ for each
$i=1,2,\ldots,k$, by Lemma~\ref{LEMMA: IMPLEMENTATION}, the calls to
Subroutine~BEST, i.e., BEST$(l_{q_0},R[q_1],q_1)$,
BEST$(l_{q_{1}},R[q_2],q_2)$,$\ldots,$ and
BEST$(l_{q_{k-1}},R[q_k],q_k)$, totally take $O(R[q_k]+k)=O(n)$
time.
\end{proof}

Finally, we modify Algorithm ComputeHCR such that it runs in an
online manner. The modified version is given below, where we
maintain an integer pair $(\alpha,\beta)$ such that after processing
the $q^{th}$ number pair in $D$ at the $q^{th}$ iteration of the
for-loop, the integer interval $[\alpha,\beta]$ will be a
maximum-confidence endorsed interval for the subsequence
$((h_1,s_1), (h_2,s_2), \ldots, (h_q,s_q)),$ if any. The correctness
is easy to verify by noting that $r=R[q]$ holds at the end of the
$q^{th}$ iteration of the for-loop for each $q=1,2,\ldots,n$.

\begin{theorem}
 Algorithm \textsc{OnlineComputeHCR} solves the HCI problem in $O(n)$ time in an online manner.
\end{theorem}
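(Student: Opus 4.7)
The plan is to reduce the claim to the offline Theorem above by establishing the invariant that, at the end of the $q$-th iteration of the for-loop, the entire internal state of Algorithm \textsc{OnlineComputeHCR} (the list $C$, the variable $\hat{R}$, the value $R[q]$, the variable $l$, and the pair $(\alpha,\beta)$) coincides with the state that Algorithm \textsc{ComputeHCI} would have after its $q$-th iteration when run from scratch on the prefix $D_q=((h_1,s_1),\ldots,(h_q,s_q))$. Granting this, Theorem~1 applied to $D_q$ immediately yields that $[\alpha,\beta]$ is a maximum-confidence endorsed interval for $D_q$, which is exactly the online requirement.

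To support the invariant I would proceed as follows. First, note that Subroutine \textsc{RMP} is already structured as a single for-loop whose $q$-th iteration depends only on $(h_q,s_q)$ together with the running state $(C,\hat{R})$; thus it can be unrolled and interleaved with the main loop so that $R[q]$ is produced just before line~6 of Algorithm~\textsc{ComputeHCI} is executed. The author's remark that $r=R[q]$ holds at the end of the $q$-th iteration is precisely this point. Next, since the set of good indices up to position $q$ is exactly the prefix of the global good-index sequence $q_1<q_2<\cdots<q_k$ consisting of those $q_i\le q$, the sequence of BEST calls made by the online algorithm during its first $q$ iterations is a prefix of the global sequence; hence the monotonicity hypotheses of Lemma~\ref{LEMMA: IMPLEMENTATION} (namely $l_1=0$, $l_i=\text{BEST}(l_{i-1},u_{i-1},q_{i-1})$, $u_1\le\cdots\le u_k$, and $u_i\le q_i$) remain in force throughout. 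Finally, by Lemma~\ref{LEMMA: BEST2} and the correctness argument used in Theorem~1 applied to $D_q$, the pair $(\alpha,\beta)$ maintained by the running maximum in lines~8--10 is a maximum-confidence endorsed interval for $D_q$.

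For the running time, I would invoke Lemma~\ref{LEMMA: RMP}: the amortized cost per iteration of \textsc{RMP} is $O(1)$ because each index is inserted into $C$ once and removed at most once, so interleaving does not inflate the per-iteration work beyond $O(1)$ amortized. Combined with Lemma~\ref{LEMMA: IMPLEMENTATION}, which bounds the total BEST time by $O(u_k+k)=O(n)$, the overall running time is $O(n)$, and the algorithm clearly processes $D$ one element at a time. The main subtlety -- and the only genuine obstacle -- lies in verifying that the interleaving of \textsc{RMP} with the main loop does not disturb the preconditions of Lemma~\ref{LEMMA: IMPLEMENTATION}; once one checks that the good-index sequence grows only by appending (and hence the BEST-call sequence does too), this obstacle dissolves and the proof reduces cleanly to Theorem~1.
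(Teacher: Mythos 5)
Your proposal is correct and follows essentially the same route as the paper, which justifies the online version in a single remark: the interleaved \textsc{RMP} state satisfies $r=R[q]$ at the end of the $q$-th iteration, each iteration depends only on the prefix $D_q$, and correctness and the $O(n)$ bound then follow from Theorem~1 together with Lemmas~\ref{LEMMA: RMP} and~\ref{LEMMA: IMPLEMENTATION}. Your write-up merely spells out the state-coincidence invariant and the prefix property of the good-index sequence in more detail than the paper does.
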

\newpage

\begin{description}
\item[Algorithm]\textsc{OnlineComputeHCI}\itemsep=-23pt\\
\item[Input:] A sequence $D=((h_1,s_1), (h_2,s_2), \ldots, (h_n,s_n))$ of number pairs, where $s_i>0$ for all $i$,
    and a hit-support lower bound $L_h$.\\
\item[Goal:] Maintaining an integer pair
$(\alpha,\beta)$ such that after processing the $q^{th}$ number pair
in $D$ at the $q^{th}$ iteration of the for-loop, the integer
interval $[\alpha,\beta]$
is a maximum-confidence endorsed interval for the subsequence $((h_1,s_1), (h_2,s_2), \ldots, (h_q,s_q))$, if any.\\
\vspace{-28pt}
\end{description}
              \begin{algorithmic}[1]
                \STATE $l\leftarrow 0$;
                \STATE $r\leftarrow -1$;
                \STATE $\hat{r}\leftarrow 0$;
                \STATE $c_{max}\leftarrow -\infty$;
                \STATE $(\alpha,\beta)\leftarrow (0,0)$;
                \STATE create an empty list $C$;
                \FOR{$q\leftarrow 1$ to $n$}
                    \STATE $r\leftarrow -1$;
                    \WHILE{$C$ is not empty and $hit(C.lastElement()+1,q)\leq 0$}
                       \STATE delete from $C$ its last element;
                    \ENDWHILE
                \STATE insert $q$ at the end of $C$;
                \IF{$hit(\hat{r},q)\geq L_h$ or $hit(C.firstElement(),q)\geq L_h$}
                   \WHILE{$C$ is not empty and $hit(C.firstElement(),q)\geq L_h$}
                    \STATE $\hat{r}\leftarrow C.firstElement();$
                    \STATE delete from $C$ its first element;
                   \ENDWHILE
                   \STATE $r\leftarrow \hat{r};$
                \ENDIF
                \IF{$r\neq -1$}
                    \STATE $l\leftarrow$ BEST$(l,r,q)$;
                    \IF{$conf(l,q)>c_{max}$}
                    \STATE $c_{max} \leftarrow conf(l,q);$
                    \STATE $(\alpha,\beta)\leftarrow (l,q)$;
                    \ENDIF
                \ENDIF
           \ENDFOR
            \end{algorithmic}

\newpage

\section {A subqurdratic time algorithm for the PSEI problem}
Define the length of an index interval $[i,j]$ to be $length(i,j)=j-i+1$.
In the PSEI problem, the input sequence $D$ is plain, so we have
$length(i,j)=sup(i,j)=j-i+1$ and $ecc(i,j)
=\frac{hit(i,j)}{\sqrt{length(i,j)}}$. Thus, we can reformulate the
PSEI problem as follows: Given a sequence $D=((h_1,s_1), (h_2,s_2),
\ldots, (h_n,s_n))$ of number pairs, where $s_i=1$ for all $i$, and
a length lower bound $L=L_s$, find an index interval $I=[i,j]$ maximizing
the eccentricity $ecc(i,j)= \frac{hit(i,j)}{\sqrt{length(i,j)}}$
subject to $length(i,j)\geq L$.

\subsection {Preliminaries}
Let $H=(h_1,h_2,\ldots,h_n)$ and $P_H[0..n]$ be the prefix-sum array
of $H$, where $P_H[0]=0$ and $P_H[i]=P_H[i-1]+h_i$ for each
$i=1,2,\ldots,n$. Note that $hit(i,j) = P_H[i]-P_H[j-1]$,
$length(i,j) = j-i+1$ and $ecc(i,j) =
\frac{P_H[i]-P_H[j-1]}{\sqrt{j-i+1}}$.  Thus, after constructing
$P_H$ in $O(n)$ time, each computation of the hit-support, length,
or eccentricity of an index interval can be done in constant time.
In the following, we review some definitions and theorems. For more
details, readers can refer to
\cite{Aho,Bernholt2007,Bernholt,Lin,Takaoka}.

\begin{definition}
A function $f:\mathbb{R}^+\times\mathbb{R}\rightarrow \mathbb{R}$ is
said to be \emph{quasiconvex} if and only if for all points $u,v\in
\mathbb{R}^+\times\mathbb{R}$ and all $\lambda \in [0,1]$, we have
$f(\lambda\cdot u + (1-\lambda)\cdot v) \leq \max\{f(u),f(v)\}$.
\end{definition}

\begin{lemma}\cite{Bernholt}\label{quasiconvex}
Define $f:\mathbb{R}^+\times\mathbb{R}$ by letting
\[ f(\ell,h)=
    \left\{
        \begin{array}{ll}
             \frac{h}{\sqrt \ell} & \mbox{if } h\geq0;\\
             0 & \mbox{otherwise}.
        \end{array}
    \right.
    \]
Then $f$ is quasiconvex.
\end{lemma}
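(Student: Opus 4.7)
My plan is to verify the defining inequality of quasiconvexity directly, by case analysis on the sign of the $h$-coordinate of the convex combination. Write $u=(\ell_1,h_1)$, $v=(\ell_2,h_2)$, $w=\lambda u+(1-\lambda)v$, and $M=\max\{f(u),f(v)\}$. Since $f\geq 0$ throughout its domain, $M\geq 0$.

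If $\lambda h_1+(1-\lambda)h_2<0$, then $f(w)=0\leq M$ and the desired inequality is immediate. The substantive case is $\lambda h_1+(1-\lambda)h_2\geq 0$, in which I need to show
\[
\frac{\lambda h_1+(1-\lambda)h_2}{\sqrt{\lambda\ell_1+(1-\lambda)\ell_2}} \;\leq\; M.
\]
I would combine two ingredients. First, letting $h_i^{+}:=\max\{h_i,0\}$, we have $h_i\leq h_i^{+}$ and $h_i^{+}/\sqrt{\ell_i}=f(\ell_i,h_i)\leq M$ for $i\in\{1,2\}$, so taking the convex combination yields
\[
\lambda h_1+(1-\lambda)h_2 \;\leq\; M\bigl(\lambda\sqrt{\ell_1}+(1-\lambda)\sqrt{\ell_2}\bigr).
\]
Second, the concavity of $\sqrt{\,\cdot\,}$ delivers $\lambda\sqrt{\ell_1}+(1-\lambda)\sqrt{\ell_2}\leq\sqrt{\lambda\ell_1+(1-\lambda)\ell_2}$, which reduces after squaring to the elementary identity $\lambda(1-\lambda)(\sqrt{\ell_1}-\sqrt{\ell_2})^2\geq 0$. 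Chaining the two estimates gives $\lambda h_1+(1-\lambda)h_2\leq M\sqrt{\lambda\ell_1+(1-\lambda)\ell_2}$, i.e.\ $f(w)\leq M$, as required.

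There is no substantial obstacle. The only subtle point is that the raw ratio $g(\ell,h)=h/\sqrt{\ell}$ is not itself quasiconvex on $\mathbb{R}^{+}\times\mathbb{R}$, since for $c<0$ the set $\{h\leq c\sqrt{\ell}\}$ lies below a convex function and is generally nonconvex; so the truncation at $0$ when $h<0$ is essential. This essentiality is reflected in the argument by the step $h_i\leq h_i^{+}$, without which the convex combination of the $h$-coordinates could not be controlled by $M\bigl(\lambda\sqrt{\ell_1}+(1-\lambda)\sqrt{\ell_2}\bigr)$.
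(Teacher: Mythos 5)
Your argument is correct. Note that the paper itself gives no proof of this lemma: it is quoted verbatim from Bernholt and Hofmeister \cite{Bernholt}, so there is no in-paper proof to compare against. Your verification is a sound, self-contained substitute. The case split on the sign of $\lambda h_1+(1-\lambda)h_2$ is the right organizing principle; the bound $h_i\leq h_i^{+}=\sqrt{\ell_i}\,f(\ell_i,h_i)\leq M\sqrt{\ell_i}$ and the concavity estimate $\lambda\sqrt{\ell_1}+(1-\lambda)\sqrt{\ell_2}\leq\sqrt{\lambda\ell_1+(1-\lambda)\ell_2}$ chain together exactly as you claim (the last multiplication by $M$ is legitimate because you established $M\geq 0$ at the outset). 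Your closing remark is also accurate and worth keeping: the untruncated ratio $h/\sqrt{\ell}$ fails quasiconvexity because its sublevel sets $\{h\leq c\sqrt{\ell}\}$ for $c<0$ are hypographs of convex functions and hence nonconvex, so the truncation at $0$ is genuinely load-bearing, and your step $h_i\leq h_i^{+}$ is precisely where it enters.
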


\begin{theorem}\cite{Bernholt2007}\label{MSI}
Given a sequence of $n$ number pairs
$D=((h_1,s_1),(h_2,s_2),\ldots,(h_n,s_n))$, a length lower bound
$L$, and a quasiconvex score function
$f:\mathbb{R}^+\times\mathbb{R}\rightarrow \mathbb{R}$, there exists
an algorithm, denoted by MSI$(D,L,f)$, which can find an index interval
$[i,j]$ that maximizes $f(length(i,j),hit(i,j))$ subject to
$length(i,j)\geq L$ in $O(n)$ time.
\end{theorem}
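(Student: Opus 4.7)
The plan is to recast the optimization as a planar geometric problem, exploit quasiconvexity of $f$ to restrict the candidate set to vertices of a convex hull, and then maintain the hull together with its $f$-optimum incrementally.

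For each $i\in\{0,1,\ldots,n\}$ associate the point $q_i=(i,P_H[i])\in\mathbb{R}^2$. The interval $[i+1,j]$ has length $j-i$ and hit-support $P_H[j]-P_H[i]$, so for a fixed right endpoint $j$, choosing a left endpoint amounts to choosing $i\in\{0,1,\ldots,j-L\}$ maximizing
\[
g_j(q_i)\;:=\;f\bigl(j-i,\;P_H[j]-P_H[i]\bigr).
\]
Since $g_j$ is the composition of $f$ with the invertible affine map $q\mapsto (j,P_H[j])-q$, and quasiconvexity is preserved under such compositions, $g_j$ is quasiconvex. A standard argument then shows that the maximum of a quasiconvex function over any finite point set is attained at an extreme point of its convex hull: an interior point is a convex combination of two hull vertices $u,v$, so its $g_j$-value is bounded by $\max\{g_j(u),g_j(v)\}$. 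Hence it suffices to maintain the convex hull of $Q_j:=\{q_0,q_1,\ldots,q_{j-L}\}$ and query it with $g_j$ as $j$ advances.

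The hull is maintained incrementally: incrementing $j$ enlarges $Q_j$ by exactly one point, namely $q_{j+1-L}$, and Graham-scan-style maintenance using an upper and lower deque handles each insertion in amortized $O(1)$ time, so all $n$ hull updates cost $O(n)$ overall.

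The main obstacle is then answering the $n$ queries in amortized constant time each. A single parametric or ternary search on the hull costs $O(\log n)$, giving only the weaker bound $O(n\log n)$. To shave the logarithmic factor I would argue that the optimizing hull vertex evolves monotonically along the chain as $j$ advances: the shift in $(j,P_H[j])$ translates the affine pre-image in a controlled direction, and on a fixed hull the maximum of a quasiconvex function moves monotonically along the chain under such translations. A single pointer sweeping the hull then serves all queries in amortized $O(1)$ time, with newly inserted hull vertices absorbed by the standard deque-based potential argument. Formalizing the monotonicity claim for general quasiconvex $f$ and reconciling it with the hull insertions that may invalidate the pointer — so that the amortized cost per step remains constant overall — is the crux of the argument.
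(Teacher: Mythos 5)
The paper does not actually prove this theorem: it is imported verbatim from Bernholt~et~al.~\cite{Bernholt2007} and used as a black box, so there is no in-paper proof to compare yours against. Judged on its own terms, your reduction is the standard one underlying that cited framework and its ingredients are sound: the points $q_i=(i,P_H[i])$, the observation that $g_j$ is quasiconvex because quasiconvexity is preserved under composition with an affine map, the fact that a quasiconvex function attains its maximum over a polytope at a vertex (your phrase ``convex combination of two hull vertices'' should read ``of the hull vertices,'' with the bound following by induction on the number of terms in the combination, but that is cosmetic), and the amortized-$O(1)$ maintenance of the hull of $\{q_0,\ldots,q_{j-L}\}$ under insertions in increasing $x$-order are all correct.

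The gap is exactly where you place it, and it is a genuine missing idea rather than a formality. The assertion that ``on a fixed hull the maximum of a quasiconvex function moves monotonically along the chain under such translations'' does not follow from quasiconvexity. From step $j$ to $j+1$ the reference point $(j,P_H[j])$ moves by $(1,h_{j+1})$, a vector whose direction changes arbitrarily with the sign and magnitude of $h_{j+1}$; for a general quasiconvex $f$ the maximizing vertex of $g_j$ over even a \emph{fixed} polygon can oscillate back and forth along the chain as that direction turns, so a single sweeping pointer cannot serve all queries, and the deque potential argument has nothing to absorb. Without this step your argument delivers only the $O(n\log n)$ bound you already concede (ternary search on the hull per right endpoint), which is strictly weaker than the claimed $O(n)$. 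Closing the gap requires an additional structural idea beyond quasiconvexity --- it is precisely the main technical contribution of \cite{Bernholt2007} --- so as written the proposal does not establish the theorem.
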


By the fact $f(\ell,h)=h$ is quasiconvex and Theorem~\ref{MSI}, we
have the following corollary, which was also proved
in~\cite{Bernholt2007,Fan,Lin}.

\begin{corollary}\label{max sum}
There exists an $O(n)$-time algorithm for finding an index interval
$[i,j]$ maximizing $hit(i,j)$ subject to $length(i,j)\geq L$.
\end{corollary}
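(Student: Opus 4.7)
The plan is to derive this corollary as a direct application of Theorem~\ref{MSI} with the constant-in-$\ell$ score function $f(\ell,h)=h$. The only real work is to verify that this $f$ fits the hypotheses of Theorem~\ref{MSI}, namely that its signature is $\mathbb{R}^+\times\mathbb{R}\to\mathbb{R}$ and that it is quasiconvex. The first condition is immediate from the formula.

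For quasiconvexity, I would pick arbitrary points $u=(\ell_u,h_u)$ and $v=(\ell_v,h_v)$ in $\mathbb{R}^+\times\mathbb{R}$ and any $\lambda\in[0,1]$, and then simply observe that
\[
f(\lambda u+(1-\lambda)v) \;=\; \lambda h_u+(1-\lambda)h_v \;\leq\; \max\{h_u,h_v\} \;=\; \max\{f(u),f(v)\},
\]
where the inequality holds because a convex combination of two real numbers never exceeds the larger of the two. Hence $f$ is quasiconvex (in fact linear in the second coordinate and constant in the first).

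With quasiconvexity in hand, invoking MSI$(D,L,f)$ from Theorem~\ref{MSI} yields, in $O(n)$ time, an index interval $[i,j]$ that maximizes $f(length(i,j),hit(i,j))=hit(i,j)$ subject to $length(i,j)\geq L$, which is exactly the conclusion of the corollary. There is no genuine obstacle in the argument; the content of the corollary is really a repackaging of Theorem~\ref{MSI} for the simplest admissible score function, and the only thing one must be careful about is to spell out the trivial quasiconvexity check so that the hypotheses of Theorem~\ref{MSI} are formally satisfied.
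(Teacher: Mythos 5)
Your proposal is correct and matches the paper's own derivation exactly: the paper also obtains this corollary by observing that $f(\ell,h)=h$ is quasiconvex and invoking Theorem~\ref{MSI}. You merely spell out the (trivial) quasiconvexity check that the paper leaves implicit.
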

\newpage
We next prove that if all index intervals with lengths $\geq L$ have
negative hit-supports, then the optimal solution must have length
less than~$2L$.

\begin{lemma}\label{2L}
If $hit(p,q)<0$ holds for each index interval $[p,q]$ of length at least $L$,
then $length(p^*,q^*)<2L$, where $(p^*,q^*)= \arg
\max\limits_{length(p,q)\geq L} ecc(p,q)$.
\end{lemma}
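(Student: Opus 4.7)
The plan is to argue by contradiction: assume $n' := length(p^*,q^*) \geq 2L$ and then exhibit an index interval of length $\geq L$ with strictly greater eccentricity than $(p^*,q^*)$, contradicting optimality.

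Since $n' \geq 2L$, I can pick an index $m$ with $p^*+L-1 \leq m \leq q^*-L$, so that both halves $[p^*,m]$ and $[m+1,q^*]$ have length at least $L$. By the hypothesis of the lemma, $hit(p^*,m) < 0$ and $hit(m+1,q^*) < 0$; write $a=-hit(p^*,m)>0$, $b=-hit(m+1,q^*)>0$, $\ell_1=length(p^*,m)\geq L$, and $\ell_2=length(m+1,q^*)\geq L$. Then
\[
ecc(p^*,q^*) \;=\; -\frac{a+b}{\sqrt{\ell_1+\ell_2}}, \qquad ecc(p^*,m) \;=\; -\frac{a}{\sqrt{\ell_1}}, \qquad ecc(m+1,q^*) \;=\; -\frac{b}{\sqrt{\ell_2}}.
\]
Showing that at least one of the halves beats $(p^*,q^*)$ therefore reduces to the pointwise inequality
\[
\min\!\left(\frac{a}{\sqrt{\ell_1}},\frac{b}{\sqrt{\ell_2}}\right) \;<\; \frac{a+b}{\sqrt{\ell_1+\ell_2}}.
\]

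The main technical step is the verification of this inequality. Assume without loss of generality that $a/\sqrt{\ell_1}\leq b/\sqrt{\ell_2}$, equivalently $a^2\ell_2\leq b^2\ell_1$. Squaring the desired inequality $a\sqrt{\ell_1+\ell_2}<(a+b)\sqrt{\ell_1}$ rearranges to $a^2\ell_2 < b(2a+b)\ell_1$, and this is immediate from the chain $a^2\ell_2 \leq b^2\ell_1 < b(2a+b)\ell_1$, the last strict inequality using $a>0$. So the smaller of the two half-eccentricities in absolute value is strictly smaller than $|ecc(p^*,q^*)|$; because all three eccentricities are negative, the corresponding half has length $\geq L$ and eccentricity strictly greater than $ecc(p^*,q^*)$, contradicting the choice of $(p^*,q^*)$.

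The only subtle point I expect is the temptation to invoke the quasiconvexity machinery of Lemma~\ref{quasiconvex}: the truncated score $f$ used there collapses every negative-hit interval to $0$, so it cannot distinguish two negative-eccentricity candidates and is useless for this lemma. The argument instead relies on the strict concavity of $\sqrt{\cdot}$ in the form encoded by the simple squaring calculation above, which is why a direct arithmetic comparison is the right tool.
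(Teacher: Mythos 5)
Your proof is correct and follows essentially the same route as the paper's: assume $length(p^*,q^*)\geq 2L$ for contradiction, split the interval into two halves each of length at least $L$, and use the hypothesis that every such interval has negative hit-support to show one half has strictly larger eccentricity. The only difference is the middle step --- the paper obtains the key inequality from the mediant inequality on the densities $hit/length$ and then multiplies by $\sqrt{length}$, while you verify the equivalent inequality $\min\bigl(a/\sqrt{\ell_1},\,b/\sqrt{\ell_2}\bigr)<(a+b)/\sqrt{\ell_1+\ell_2}$ by a direct squaring computation; both are valid.
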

\begin{proof}
Let $(p^*,q^*)= \arg \max\limits_{length(p,q)\geq L} ecc(p,q)$.
Suppose for contradiction that $length(p^*,q^*)\geq 2L$. Let
$c_1=\lfloor (p^*+q^*)/2 \rfloor$ and $c_2=c_1+1$. Then we have
$length(p^*,c_1)\geq L$ and $length(c_2,q^*)\geq L$. Since
$\frac{hit(p^*,q^*)}{length(p^*,q^*)} =
\frac{hit(p^*,c_1)+hit(c_2,q^*)}{length(p^*,c_1)+length(c_2,q^*)} ,$
we have \[\frac{hit(p^*,q^*)}{length(p^*,q^*)}\leq
\frac{hit(p^*,c_1)}{length(p^*,c_1)} \mbox{ or }
\frac{hit(p^*,q^*)}{length(p^*,q^*)}\leq
\frac{hit(c_2,q^*)}{length(c_2,q^*)}.\] Without loss of generality,
we assume $\frac{hit(p^*,q^*)}{length(p^*,q^*)}\leq
\frac{hit(p^*,c_1)}{length(p^*,c_1)}$. Since
$\frac{hit(p^*,q^*)}{length(p^*,q^*)}\leq
\frac{hit(p^*,c_1)}{length(p^*,c_1)}<0$ and
$\sqrt{length(p^*,q^*)}>\sqrt{length(p^*,c_1)}$, we have

\begin{eqnarray*}
&&\frac{\sqrt{length(p^*,q^*)}\cdot hit(p^*,q^*)}{length(p^*,q^*)}<
\frac{\sqrt{length(p^*,c_1)}\cdot
hit(p^*,c_1)}{length(p^*,c_1)}\\
&\Leftrightarrow& \frac{hit(p^*,q^*)}{\sqrt{length(p^*,q^*)}}<
\frac{hit(p^*,c_1)}{\sqrt{length(p^*,c_1)}}\\
&\Leftrightarrow& ecc(p*,q*)<ecc(p^*,c_1).
\end{eqnarray*}
It contradicts $(p^*,q^*)= \arg \max\limits_{length(p,q)\geq
L}ecc(p,q)$.
\end{proof}

\subsection{Subroutine}
In the following we give a new algorithm for the \textsc{Min-Plus
Convolution Problem}, which will serve as a subroutine in our
algorithm for the PSEI problem. The min-plus convolution of two
vectors \textbf{x}~$=(x_0,x_1,$ $\ldots,x_{n-1})$ and
\textbf{y}~$=(y_0,y_1,\ldots,y_{n-1})$ is a vector
\textbf{z}~$=(z_0,z_1,\ldots,z_{n-1})$ such that
$z_k=\min_{i=0}^{k}\{x_i+y_{k-i}\}$ for $k=0,1,\ldots,n-1$. Given
two vectors \textbf{x}~$=(x_0,x_1,\ldots,x_{n-1})$ and
\textbf{y}~$=(y_0,y_1,\ldots,y_{n-1})$, the \textsc{Min-Plus
Convolution Problem} is to compute the min-plus convolution
\textbf{z} of \textbf{x} and \textbf{y}. This problem has appeared
in the literature with various names such as ``minimum
convolution,'' ``epigraphical sum,'' ``inf-convolution,'' and
``lowest midpoint''
\cite{Bellman,Bergkvist,Felzenszwalb,Maragos,Moreau,Rockafellar,Stromberg}.
Although it is easy to obtain an $O(n^2)$-time algorithm, no
subquadratic algorithm was known until recently Bremner $et~al$.
\cite{Bremner} proposed an $O(n^2/\log n)$-time algorithm. In the
following, we shall give an $O(n^{1/2}T(n^{1/2}))$-time algorithm
for the \textsc{Min-Plus Convolution Problem}, where $T(n)$ is the
time required to solve the all-pairs shortest paths problem on a
graph of $n$ nodes. To date, the best algorithm for computing the
all-pairs shortest paths problem on a graph of $n$ nodes runs in
$O(n^3\frac{(\log\log n)^3}{(\log n)^2})$ time \cite{Chan}. Thus,
our work implies an $O(n^2\frac{(\log\log n)^3}{(\log n)^2})$-time
algorithm for the min-plus convolution problem, which is slightly
superior to the first subquadratic $O(n^2/\log n)$-time algorithm
recently proposed by Bremner $et~al$. \cite{Bremner}.

\begin{definition}
The min-plus product $BC$ of a $d\times n'$ matrix $B=[b_{i,j}]$ and
an $n'\times d$ matrix $C=[c_{i,j}]$ is a $d\times d$ matrix
$D=[d_{i,j}]$ where $d_{i,j}=\min_{k=0}^{n'-1}\{b_{i,k}+c_{k,j}\}$.
\end{definition}

Note that the notion of ``min-plus product'' is different from the
notion of ``min-plus convolution''. It is well known \cite{Aho} that
the time complexity of computing the min-plus product of two
$n'\times n'$ matrices is asymptotically equal to that of computing
all pairs shortest paths for a graph with $n'$ vertices. The next
lemma was proved by Takaoka in~\cite{Takaoka}. The proof of the next
lemma was also given in \cite{Takaoka}, and we include it here for
completeness.

\begin{lemma}\label{min-plus product}~\cite{Takaoka}
Given a $T(n')$-time algorithm for computing the min-plus product of
any two $n'\times n'$ matrices, the computation of the min-plus
product of $B$ and $C$, where $B$ is a $d\times n'$ matrix and $C$
is an $n'\times d$ matrix, can be done in $O(\frac{n'}{d}T(d))$ time
if $d\leq n'$.
\end{lemma}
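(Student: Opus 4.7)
The plan is to reduce the rectangular min-plus product $BC$ to a collection of square min-plus products by partitioning along the shared dimension. Since $d \leq n'$, I would split the $n'$ columns of $B$ and the $n'$ rows of $C$ into $n'/d$ consecutive blocks of width $d$ (padding with $+\infty$ entries if $d \nmid n'$, which inflates the work by at most a constant factor). For $k = 0, 1, \ldots, n'/d - 1$, let $B_k$ be the $d \times d$ submatrix consisting of columns $kd, kd+1, \ldots, (k+1)d - 1$ of $B$, and let $C_k$ be the $d \times d$ submatrix consisting of the corresponding rows of $C$.

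Next, I would invoke the given square min-plus product algorithm on each pair $(B_k, C_k)$ to obtain $D_k = B_k C_k$ in $T(d)$ time per block. The key identity justifying this reduction is the distribution of $\min$ over the partitioned index range: for the desired matrix $D = BC$,
\[
d_{i,j} \;=\; \min_{k=0}^{n'-1}\bigl\{b_{i,k} + c_{k,j}\bigr\} \;=\; \min_{0 \leq t < n'/d}\; (D_t)_{i,j},
\]
so $D$ is simply the entrywise minimum of the $n'/d$ partial product matrices. I would therefore finish by a single pass that takes the componentwise minimum across all $D_t$.

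For the running time, the $n'/d$ calls to the square algorithm contribute $O\bigl((n'/d)\,T(d)\bigr)$, and the entrywise merge of $n'/d$ matrices of size $d \times d$ costs $O\bigl((n'/d) \cdot d^2\bigr) = O(n'd)$. The only thing to check is that the merge cost is absorbed: since any algorithm for the $d \times d$ min-plus product must read its input, $T(d) = \Omega(d^2)$, hence $(n'/d)\,T(d) = \Omega(n'd)$ and the total complexity is $O\bigl((n'/d)\,T(d)\bigr)$, as claimed. The only genuine obstacle is the mild bookkeeping around divisibility of $n'$ by $d$ and the observation that $T(\cdot)$ dominates the linear-in-entries merge; both are routine.
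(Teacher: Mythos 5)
Your proof is correct and follows essentially the same route as the paper's: partition the shared dimension into $n'/d$ blocks of width $d$, compute each square product $B_kC_k$ with the given algorithm, and take the entrywise minimum. You are in fact slightly more careful than the paper, which omits the divisibility issue and the merge cost (and contains a typo, writing the total as $O(dT(n'/d))$ rather than $O(\frac{n'}{d}T(d))$); your observation that $T(d)=\Omega(d^2)$ absorbs the $O(n'd)$ merge is the right way to tidy that up.
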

\begin{proof}
For simplicity we assume that $d$ divides $n$. We first split $B$
into $n'/d$ matrices $B_1,\ldots,B_{n'/d}$ of dimension $d\times d$
and $C$ into $n'/d$ matrices $C_1,\ldots,C_{n'/d}$ of dimension
$d\times d$. Then we can compute $\{B_1C_1,B_2C_2,$
$\ldots,B_{n'/d}C_{n'/d}\}$ in $O(dT(n'/d))$ time by the given
algorithm. The $(i,j)$-th entry of the min-plus product of $B$ and
$C$ is $\min_{k=1}^{n'/d}$\{the $(i,j)$-th entry of $B_kC_k$\}.
\end{proof}

Our new algorithm for the \textsc{Min-Plus Convolution Problem} is
as follows.
\vspace{-10pt}
\begin{tabbing}
\hspace*{2em} \= \hspace*{2em} \= \hspace*{2em} \= \hspace*{2em} \= \hspace*{2em} \kill \\
\textbf{Algorithm} {\sc MinPlusConvolution} \\
\textbf{Input:} \textbf{x}~$=(x_0,x_1,\ldots,x_{n-1})$ and
\textbf{y}~$=(y_0,y_1,\ldots,y_{n-1})$.\\
\textbf{Output:} \textbf{z}~$=(z_0,z_1,\ldots,z_{n-1})$ such that
$z_k=\min_{i=0}^{k}\{x_i+y_{k-i}\}$ for $k=0,1,\ldots,n-1$.\\
\  1: Construct an $\lceil n^{1/2} \rceil \times (2n-1)$ matrix
$B=[b_{i,j}]$ such that the $i^{th}$ row of $B$ \\
\ \> is equal to
$(\infty,\ldots,\infty,x_0,x_1,\ldots,x_{n-1},\overbrace{\infty,\ldots,\infty}^{i
\times \lceil n^{1/2} \rceil})$ for $i=0,1,\ldots,\lceil n^{1/2} \rceil -1$.\\
\  2: Construct a $(2n-1) \times \lceil n^{1/2} \rceil$ matrix $C=[c_{i,j}]$ such that the transpose \\
\>\ of the $j^{th}$ column of $C$ is equal to $(\overbrace{\infty,\ldots,\infty}^{j},y_{n-1},y_{n-2},\ldots,y_{0},\infty,\ldots,\infty)$\\
\>\>\  for $j=0,1,\ldots,\lceil n^{1/2} \rceil -1$.\\
\  3: Let $D~=~[d_{i,j}]$ be the min-plus product of $B$ and $C$.\\
\  4: For $k=0,1,\ldots,n-1$ do\\
\> Find $i,j$ such that $k=i \times \lceil n^{1/2} \rceil +j$, where $0\leq j<\lceil n^{1/2} \rceil$.\\
\>Set $z_k$ to $d_{i,j}$.\\
\  5: Output \textbf{z}~$ = (z_0,z_1,\ldots,z_{n-1})$.\\
\end{tabbing}

The following lemma ensures the correctness.

\begin{lemma}
In {\sc MinPlusConvolution}, $d_{i,j}=\min_{t=0}^{i \times \lceil
n^{1/2} \rceil +j}\{x_t+y_{i \times \lceil n^{1/2} \rceil +j-t}\}$
if $0\leq i \times \lceil n^{1/2} \rceil +j\leq n-1$.
\end{lemma}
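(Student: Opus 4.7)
The plan is to prove the lemma by pure index bookkeeping, translating the min-plus product formula $d_{i,j}=\min_{k=0}^{2n-2}\{b_{i,k}+c_{k,j}\}$ into the target convolution formula by a change of variable. First I would pin down the finite entries of row~$i$ of~$B$ and column~$j$ of~$C$. Row~$i$ has $i\lceil n^{1/2}\rceil$ infinities at the right and $n$ genuine $x$-values in the middle, so the remaining $2n-1-n-i\lceil n^{1/2}\rceil=n-1-i\lceil n^{1/2}\rceil$ infinities are at the left; hence $b_{i,k}=x_{k-(n-1-i\lceil n^{1/2}\rceil)}$ precisely for $n-1-i\lceil n^{1/2}\rceil\le k\le 2n-2-i\lceil n^{1/2}\rceil$. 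Similarly $c_{k,j}=y_{n-1-k+j}$ precisely for $j\le k\le j+n-1$.

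Next I would introduce the substitution $t=k-(n-1-i\lceil n^{1/2}\rceil)$ and write $m=i\lceil n^{1/2}\rceil+j$. A direct computation gives $b_{i,k}=x_t$ and $c_{k,j}=y_{m-t}$ whenever both entries are finite. The two feasibility windows become $0\le t\le n-1$ (from $B$) and $m-n+1\le t\le m$ (from $C$). Using the hypothesis $0\le m\le n-1$ of the lemma, we have $m-n+1\le 0$ and $m\le n-1$, so the intersection collapses exactly to $0\le t\le m$, and the required $y$-index $m-t$ lies in $[0,m]\subseteq[0,n-1]$. Outside this window at least one factor is $\infty$, so it contributes nothing to the minimum. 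Consequently
\[
d_{i,j}=\min_{k=0}^{2n-2}\{b_{i,k}+c_{k,j}\}=\min_{t=0}^{m}\{x_t+y_{m-t}\},
\]
which is the claimed identity.

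The only real obstacle is the index arithmetic itself: there are two independent shifts (the number of leading infinities in $B$'s row and the number of leading infinities in $C$'s column), and it is easy to be off by one or to overlook that both windows overlap on the full range $[0,m]$. Once the shifts are lined up and the window intersection is verified, the conclusion follows immediately from the definition of the min-plus product.
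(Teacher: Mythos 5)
Your proof is correct and takes essentially the same route as the paper's: both arguments locate the window of indices where row $i$ of $B$ and column $j$ of $C$ are simultaneously finite, discard the terms containing an $\infty$, and re-index that window to $[0,\,i\lceil n^{1/2}\rceil+j]$. (Your upper limit $2n-2$ for the product index is in fact the correct one for a matrix with $2n-1$ columns; the paper's displayed $\min_{t=0}^{2n-1}$ is a harmless notational off-by-one.)
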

\begin{proof}
\begin{eqnarray*}
d_{i,j}& = & \min_{t=0}^{2n-1}\{b_{i,t}+c_{t,j}\}\\
 &=& \min \{
\min_{t=0}^{n-2-i \times \lceil n^{1/2}
\rceil}\{b_{i,t}+c_{t,j}\},\min_{t=n-1-i \times \lceil
n^{1/2}\rceil}^{n+j-1}\{b_{i,t}+c_{t,j}\},
\min_{t=n+j}^{2n-1}\{b_{i,t}+c_{t,j}\} \}\\ &=&  \min \{
\min_{t=0}^{n-2-i \times \lceil n^{1/2}
\rceil}\{\infty+c_{t,j}\},\min_{t=n-1-i \times \lceil
n^{1/2}\rceil}^{n+j-1}\{b_{i,t}+c_{t,j}\},
\min_{t=n+j}^{2n-1}\{b_{i,t}+\infty\} \}\\ &=& \min_{t=n-1-i \times
\lceil n^{1/2}\rceil}^{n+j-1}\{b_{i,t}+c_{t,j}\}\\ &=&
\min\{x_{0}+y_{i \times \lceil n^{1/2} \rceil +j}, x_{1}+y_{i \times
\lceil n^{1/2} \rceil +j-1}+\cdots+ x_{i \times \lceil n^{1/2}
\rceil +j}+y_{0} \}\\ &=& \min_{t=0}^{i \times \lceil n^{1/2} \rceil
+j}\{x_t+y_{i \times \lceil n^{1/2} \rceil +j-t}\}
\end{eqnarray*}
\end{proof}

We now analyze the time complexity. Let $T(n)$ denote the time
required to compute the min-plus product of two $n\times n$
matrices. Steps~1 and 2 take $O(n^{3/2})$ time, and by
Lemma~\ref{min-plus product}, Step~3 takes
$O(\frac{2n-1}{n^{1/2}}T(\lceil
n^{1/2}\rceil))=O(n^{1/2}T(n^{1/2}))$ time. Steps 4 and 5 take
$O(n)$ time. Therefore, the total running time is
$O(n^{1/2}T(n^{1/2})+n^{3/2})$. Since $T(n)=\Omega(n^2)$, we have
$O(n^{1/2}T(n^{1/2})+n^{3/2})=O(n^{1/2}T(n^{1/2}))$.
Theorem~\ref{min-plus convolution} summarizes our results for the
\textsc{Min-Plus Convolution Problem}.

\begin{theorem}\label{min-plus convolution}
The running time of Algorithm {\sc MinPlusConvolution} is
$O(n^{1/2}T(n^{1/2}))$, where $T(n)$ is the time required to compute
the min-plus product of two $n\times n$ matrices.
\end{theorem}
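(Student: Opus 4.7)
The plan is to prove the time bound by walking through Algorithm \textsc{MinPlusConvolution} step by step and bounding the cost of each piece, then combining.

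First I would handle Steps 1 and 2, which build the matrices $B$ and $C$. The matrix $B$ has $\lceil n^{1/2}\rceil$ rows, each of length $2n-1$, and each row is just a shifted copy of $\mathbf{x}$ padded with $\infty$'s, so filling $B$ entry-by-entry costs $O(n^{1/2}\cdot n)=O(n^{3/2})$. The matrix $C$ has $2n-1$ rows and $\lceil n^{1/2}\rceil$ columns, so by the same accounting it is built in $O(n^{3/2})$ time. These are the ``wide, skinny'' matrices of shape $\lceil n^{1/2}\rceil\times(2n-1)$ and $(2n-1)\times\lceil n^{1/2}\rceil$ that Step 3 then multiplies.

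Next I would invoke Lemma~\ref{min-plus product} with the parameters $d=\lceil n^{1/2}\rceil$ and $n'=2n-1$. Since $d\le n'$, the lemma yields a running time of $O\!\left(\tfrac{n'}{d}\,T(d)\right)=O\!\left(\tfrac{2n-1}{\lceil n^{1/2}\rceil}\,T(\lceil n^{1/2}\rceil)\right)=O(n^{1/2}T(n^{1/2}))$ for Step 3. Steps 4 and 5 merely read off the $n$ output entries $z_k$ from the $d\times d$ product matrix $D$, one per value of $k$, so they take $O(n)$ time; correctness of this readout is exactly what the preceding correctness lemma guarantees.

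Summing the four contributions gives a total running time of $O(n^{3/2}+n^{1/2}T(n^{1/2})+n)=O(n^{3/2}+n^{1/2}T(n^{1/2}))$. The final step is to swallow the $n^{3/2}$ term: since any min-plus product algorithm must at least read its $\Theta(m^2)$ input entries, we have $T(m)=\Omega(m^2)$, hence $n^{1/2}T(n^{1/2})=\Omega(n^{1/2}\cdot n)=\Omega(n^{3/2})$, and the bound collapses to $O(n^{1/2}T(n^{1/2}))$. The only substantive step is the application of Lemma~\ref{min-plus product}; everything else is a routine counting argument, so I do not expect any real obstacle in writing the proof out.
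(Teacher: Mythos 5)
Your proof is correct and follows essentially the same route as the paper: bound Steps 1--2 by $O(n^{3/2})$, apply Lemma~\ref{min-plus product} with $d=\lceil n^{1/2}\rceil$ and $n'=2n-1$ to get $O(n^{1/2}T(n^{1/2}))$ for Step 3, charge $O(n)$ to Steps 4--5, and absorb the $n^{3/2}$ term using $T(m)=\Omega(m^2)$. The only difference is that you explicitly justify $T(m)=\Omega(m^2)$ by the input-size argument, which the paper simply asserts.
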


The next Lemma was proved by Bergkvist and Damaschke
in~\cite{Bergkvist}.

\begin{lemma}\cite{Bergkvist}\label{reduction}
Given a sequence $H=(h_1,h_2,\ldots,h_n)$, the \textsc{Maximum
Consecutive Sums Problem} is to compute a sequence
$(w_1,w_2,\ldots,w_n)$ where $w_i=\max\{\sum_{j=p}^q h_j:
length(p,q) = i\}$ for each $i=1,2,\ldots,n$. The \textsc{Maximum
Consecutive Sums Problem} can be reduced to the \textsc{Min-Plus
Convolution Problem} in linear time.
\end{lemma}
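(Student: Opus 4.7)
The plan is to express $w_i$ via prefix sums and then re-parameterize so that the maximization ranges over index pairs with a fixed sum, casting the problem as an instance of min-plus convolution. Let $P_H[0]=0$ and $P_H[k]=P_H[k-1]+h_k$ for $k=1,\ldots,n$; then $\sum_{j=p}^{q}h_j=P_H[q]-P_H[p-1]$, so taking $p=q-i+1$ yields
\[ w_i \;=\; \max_{i\leq q\leq n}\{P_H[q]-P_H[q-i]\}. \]

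Next, I would turn the index difference into an index sum by reversing one prefix-sum vector and negating the other. Define two vectors of length $n+1$,
\[ \mathbf{x}'=(-P_H[0],-P_H[1],\ldots,-P_H[n]), \qquad \mathbf{y}'=(P_H[n],P_H[n-1],\ldots,P_H[0]), \]
so that $x'_j+y'_k=-P_H[j]+P_H[n-k]$. Substituting $q=j$ and $q-i=n-k$ is equivalent to $j+k=n+i$, and the constraint $i\leq q\leq n$ becomes $i\leq j\leq n$, which combined with $j+k=n+i$ forces $0\leq k\leq n$. Therefore
\[ w_i \;=\; -\min_{j+k=n+i}\{x'_j+y'_k\}, \]
with the understanding that indices outside $[0,n]$ are forbidden. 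To match the convolution definition in the paper, which outputs only indices $0,\ldots,m-1$ for length-$m$ inputs, I would pad $\mathbf{x}'$ and $\mathbf{y}'$ with $\infty$-entries up to length $2n+1$; the padding automatically disqualifies out-of-range summands from attaining the minimum, so if $\mathbf{z}$ is the min-plus convolution of the padded vectors then $w_i=-z_{n+i}$ for $i=1,2,\ldots,n$.

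Finally, I would account for the running time of the reduction itself, excluding the convolution call. Constructing $P_H$, forming $\mathbf{x}'$ by negation and $\mathbf{y}'$ by reversal, appending the $\infty$-padding, and applying the index shift $i\mapsto n+i$ together with the final negation all run in $O(n)$ time, establishing the linear-time reduction. The only part that requires care is the index bookkeeping, but the substitution $q=j$, $q-i=n-k$ handles it cleanly, so I do not anticipate any real obstacle beyond keeping the summation range and the padding consistent.
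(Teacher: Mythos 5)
The paper does not actually prove this lemma---it is stated as a citation to Bergkvist and Damaschke~\cite{Bergkvist} with no accompanying proof---so there is no in-paper argument to compare yours against. That said, your reduction is correct and self-contained, and it is the standard one. The identity $w_i=\max_{i\leq q\leq n}\{P_H[q]-P_H[q-i]\}$ is right, the substitution $j=q$, $k=n-(q-i)$ correctly turns the index difference into the fixed sum $j+k=n+i$, and the range check works out: $i\leq j\leq n$ together with $j+k=n+i$ gives $i\leq k\leq n$, while the $\infty$-padding (consistent with the paper's own use of $\infty$ entries in \textsc{MinPlusConvolution}) exactly disqualifies the pairs with $j<i$ or $j>n$, so $w_i=-z_{n+i}$ holds as claimed. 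The only cosmetic caveat is that the paper's convolution is defined on two vectors of a common length $m$ with output indices $0,\ldots,m-1$, so you should state explicitly that both padded vectors have length $2n+1$ (which you implicitly do); since $n+i\leq 2n<2n+1$, every needed output index is computed. All construction and post-processing steps are $O(n)$, so the reduction is linear. In short: correct, and it usefully fills in a proof the paper omits.
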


\begin{corollary}\label{MaxCW}
Given a sequence $H=(h_1,h_2,\ldots,h_n)$, we can compute in
$O(n^{1/2}T(n^{1/2}))$ time a sequence $(w_1,w_2,\ldots,w_n)$ such
that $w_i=\max\{\sum_{j=p}^q h_j: length(p,q) = i\}$ for each
$i=1,2,\ldots,n$ by making use of the Alogirthm {\sc
MinPlusConvolution}, where $T(n)$ is the time required to compute
the min-plus product of two $n\times n$ matrices.
\end{corollary}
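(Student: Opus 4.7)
The plan is to get this corollary almost for free by chaining together two results that have already been established in the excerpt: Lemma \ref{reduction} (the Bergkvist--Damaschke reduction) and Theorem \ref{min-plus convolution} (the running time of Algorithm {\sc MinPlusConvolution}). First I would invoke Lemma \ref{reduction} to transform the given instance $H=(h_1,\ldots,h_n)$ of the Maximum Consecutive Sums Problem into an instance of the Min-Plus Convolution Problem of the same asymptotic size, using only $O(n)$ additional work. Then I would feed that instance into Algorithm {\sc MinPlusConvolution} and use Theorem \ref{min-plus convolution} to bound the time spent on the convolution by $O(n^{1/2}T(n^{1/2}))$. Finally, I would translate the output of the convolution back into the desired $(w_1,w_2,\ldots,w_n)$ using the linear-time post-processing implicit in Lemma \ref{reduction}.

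To analyze the total running time, I would add the cost of the reduction, the convolution, and the inverse translation: $O(n) + O(n^{1/2}T(n^{1/2})) + O(n)$. Since $T(n)=\Omega(n^2)$ (we need at least to read the input matrices for a min-plus product), we have $n^{1/2}T(n^{1/2}) = \Omega(n^{3/2})$, which dominates the $O(n)$ terms, so the combined time is $O(n^{1/2}T(n^{1/2}))$, exactly as claimed.

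Essentially no step is difficult: the Bergkvist--Damaschke reduction is cited as a black box in Lemma \ref{reduction}, and Algorithm {\sc MinPlusConvolution} together with its timing analysis is already established as Theorem \ref{min-plus convolution}. The only mild subtlety worth spelling out, for the reader who wants to see why the reduction is linear, is that one can take prefix sums $P_k=h_1+\cdots+h_k$ and observe $\max\{\sum_{j=p}^q h_j:q-p+1=i\}=\max_{0\le m\le n-i}(P_{m+i}-P_m)$, which after a suitable negation and reindexing (for example, defining $x_a=P_a$ and $y_b=-P_{n-b}$) becomes the $k$-th entry of a min-plus convolution with $k=n-i$; extracting the $w_i$'s from the convolution output is then just array indexing and sign flipping. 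The main ``obstacle,'' such as there is one, is therefore purely bookkeeping: making sure the reduction's indexing conventions line up with those used by Algorithm {\sc MinPlusConvolution} so that the post-processing stays within $O(n)$ time. With that in place the corollary follows immediately.
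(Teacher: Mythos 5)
Your proposal is correct and matches the paper's proof, which likewise derives the corollary immediately by composing Lemma~\ref{reduction} with Theorem~\ref{min-plus convolution}; the paper states this in one line and leaves the reduction as a black box. Your extra bookkeeping on the prefix-sum encoding ($w_i=\max_{0\le m\le n-i}(P_{m+i}-P_m)$ becoming entry $n-i$ of a min-plus convolution after negation) is a sound elaboration of the cited reduction, not a different route.
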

\begin{proof}
Immediately from Theorem~\ref{min-plus convolution} and
Lemma~\ref{reduction}.
\end{proof}

\subsection {Algorithm}
We next show how to solve the PSEI problem in
$O(n\frac{T(L^{1/2})}{L^{1/2}})$ time, where $T(n')$ is the time
required to compute the min-plus product of two $n'\times n'$
matrices. To avoid notational overload, we assume that $4L$ divides
$n$. By Corollary~\ref{max sum}, we can find an index interval $[i,j]$
maximizing $hit(i,j)$ subject to $length(i,j)\geq L$ in linear time.
Then there are three cases to consider: (1) $hit(i,j)=0$; (2)
$hit(i,j)>0$; (3) $hit(i,j)<0$. If it is Case~1, then $[i,j]$ must
be an optimal solution, and we are done. If it is Case~2, then we
know there is at least one index interval satisfying the length constraint
with positive hit-support. Define $f(\ell,h)$ by
\[ f(\ell,h)=
    \left\{
        \begin{array}{ll}
             \frac{h}{\sqrt \ell} & \mbox{if } h\geq0;\\
             0 & \mbox{otherwise},
        \end{array}
    \right.
    \]
By Lemma~\ref{quasiconvex}, $f$ is quasiconvex,  so we can call
MSI$(D,L,f)$ to find the index interval $[i',j']$ maximizing $f(i',j')$
subject to $length(i',j')\geq L$ in linear time. Clearly,  $[i',j']$
is an optimal solution, and we are done. If it is Case~3, i.e., all
index intervals satisfying the length constraint have negative
hit-supports, we do the following. First, by letting $I_k$ be the
index interval $[2kL+1, 2kL+4L]$ for each $k=0, 1,\ldots,\frac{n}{2L}-2$,
we can divide the whole index interval $[1,n]$ into $\frac{n}{2L}-1$
subintervals, each of length $4L$.
By making use of Corollary~\ref{MaxCW}, we are able to compute the
index interval $[i_k,j_k]\subseteq I_k$ maximizing the eccentricity
subject to the length constraint in $O(L^{1/2}T(L^{1/2}))$ time for
each $k = 0, 1,\ldots, \frac{n}{2L}-2$. According to Lemma~\ref{2L},
some $[i_k,j_k]$ must be the optimal solution. The detailed
algorithm is given below.

\begin{description}
\item[Algorithm]\textsc{ComputePSEI}\itemsep=-23pt\\
\item[Input:] A plain sequence $D$ of $n$ number pairs and a length lower bound $L$ with $1\leq L \leq n$.\\
\item[Output:] An index interval $I=[i,j]$ maximizing $ecc(i,j)$ subject to $length(i,j)\geq L$.\\
\vspace{-29pt}
\end{description}
    \begin{description}\itemsep=-21pt
        \item[\ \textmd{1:}] $(i,j)\leftarrow\arg \max\limits_{length(p,q)\geq L} hit(p,q)$.\\
        \item[\ \textmd{2:}] If $hit(i,j)=0$, then return $[i,j]$.\\
        \item[\ \textmd{3:}] If $hit(i,j)>0$ then\\
            \vspace{-27pt}
            \begin{enumerate}\itemsep=-21pt
                \item[\ \textmd{1:}] define $f:\mathbb{R}^+\times\mathbb{R}$ by letting
                $f(\ell,h)=\frac{h}{\sqrt{\ell}}$ if $h\geq0$ and 0, otherwise;\\
                \item[\ \textmd{2:}] return MSI$(D,L,f)$.\\
            \end{enumerate}
            \vspace{-7pt}
        \item[\ \textmd{4:}] For $k$ from  0 to $\frac{n}{2L}-2$,\\
        \vspace{-27pt}

              \begin{enumerate}\itemsep=-1pt
                \vspace{-2pt}
                \item compute $(w_1,\ldots,w_{4L})$, where
                $w_j=\max\{hit(p,q)|\ length(p,q)=j \mbox{ and } 2kL+1 \leq p\leq q \leq 2kL+4L\}$ for each $j=1,\ldots,4L$;
                \item compute $ecc_j = \frac{w_j}{\sqrt{j}}$ for each $j=L,L+1,\ldots,2L-1$;
                \item
                 $j_k\leftarrow \arg\max_{j=L}^{2L-1}ecc_j$;
                 \item
                 $i_k\leftarrow \arg\max_{i=2kL+1}^{2kL+4L-j_k+1}ecc(i,i+j_k-1)$.\\
              \end{enumerate}
               \vspace{-6pt}
           \item[\ \textmd{5:}] Return the max eccentricity interval in $\{[i_0,j_0],[i_1,j_1],\ldots,[i_{\frac{n}{2L}-2},j_{\frac{n}{2L}-2}]\}$.\\
\end{description}


\begin{theorem} \label{PSEI result}
Algorithm \textsc{ComputePSEI} solves the PSEI problem in
$O(n\frac{T(L^{1/2})}{L^{1/2}})$ time, where $T(n')$ is the time
required to compute the min-plus product of two $n'\times n'$
matrices.
\end{theorem}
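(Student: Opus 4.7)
The plan is to verify correctness of Algorithm \textsc{ComputePSEI} branch-by-branch and then bound the running time. Step~1 computes via Corollary~\ref{max sum} the maximum hit-support over feasible intervals, splitting the analysis into three cases. For Case~1 ($hit(i,j)=0$), every feasible interval $[p,q]$ has $hit(p,q)\leq 0$, hence $ecc(p,q)\leq 0 = ecc(i,j)$, so $[i,j]$ is optimal. For Case~2 ($hit(i,j)>0$), the global maximum of $ecc$ is strictly positive, and the function $f$ agrees with $ecc$ whenever $hit\geq 0$ while equaling $0$ otherwise, so $f\geq ecc$ pointwise and any $f$-maximizer with positive value must achieve $hit\geq 0$, where $f$ and $ecc$ coincide. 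Hence the MSI call returns an $ecc$-optimum by Lemma~\ref{quasiconvex} and Theorem~\ref{MSI}.

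Case~3 is the substantive case. Since every feasible interval has negative hit-support, Lemma~\ref{2L} forces the optimal length to lie in $[L,2L-1]$. I would then establish that any such optimum $[p^*,q^*]$ is contained in some $I_k=[2kL+1,2kL+4L]$: choosing $k=\lfloor (p^*-1)/(2L)\rfloor$ gives $2kL+1\leq p^*\leq 2kL+2L$, and hence $q^*\leq p^*+2L-2\leq 2kL+4L$, with a routine edge-case check near the right boundary of $[1,n]$ to ensure $k\leq n/(2L)-2$. Thus the global optimum is among the intervals considered in the $I_k$ loop. Within each $I_k$, Corollary~\ref{MaxCW} yields the maximum hit-support $w_j$ for every length $j\leq 4L$; the algorithm then maximizes $w_j/\sqrt{j}$ over $j\in[L,2L-1]$ to obtain $j_k$, and finally recovers the realizing start index $i_k$ by scanning the length-$j_k$ windows inside $I_k$.

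For the running time, Steps~1--3 cost $O(n)$ via Corollary~\ref{max sum} and Theorem~\ref{MSI}. In Step~4 there are $n/(2L)-1=O(n/L)$ iterations, each dominated by the $O(L^{1/2}T(L^{1/2}))$ invocation of Corollary~\ref{MaxCW} on a length-$4L$ subsequence; the remaining computations of $ecc_j$, $j_k$, and $i_k$ add only $O(L)$ per iteration. Summing over $k$ gives $O((n/L)\cdot L^{1/2}T(L^{1/2}))=O(n\,T(L^{1/2})/L^{1/2})$, which dominates $O(n)$ since $T(n')=\Omega(n'^2)$. I expect the main conceptual step to be the covering argument for Case~3: the stride $2L$ and window length $4L$ are finely tuned so that Lemma~\ref{2L}'s bound of $2L-1$ precisely guarantees every optimal interval lies inside some $I_k$, and it is this matching that converts the subquadratic min-plus-convolution bound into the claimed overall bound.
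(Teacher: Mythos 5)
Your proposal is correct and follows essentially the same three-case structure as the paper's own proof, using the same ingredients in the same roles: Corollary~\ref{max sum} for the case split, the quasiconvexity of $f$ together with Theorem~\ref{MSI} for the positive case, Lemma~\ref{2L} plus the $I_k$ decomposition and Corollary~\ref{MaxCW} for the negative case, and the identical accounting $O((n/L)\cdot L^{1/2}T(L^{1/2}))$ for the running time. Your explicit covering argument in Case~3 (taking $k=\lfloor(p^*-1)/(2L)\rfloor$ and checking the right boundary) merely spells out a step the paper asserts without detail; the approach is the same.
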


\begin{proof}
We begin by considering the correctness. Let $(i,j)=\arg
\max\limits_{length(p,q)\geq L} hit(p,q)$ and $(i^*,j^*)=\arg
\max\limits_{length(p,q)\geq L} ecc(p,q)$.

In the case where $hit(i,j)=0$, we have
$ecc(p,q)=\frac{hit(p,q)}{\sqrt{length(p,q)}}\leq 0$ for all $(p,q)$
with $length(p,q)\geq L$. It follows that $0\geq ecc(i^*,j^*)\geq
ecc(i,j)= 0$, so index interval $[i,j]$ is an optimal solution.

In the case where $hit(i,j)>0$, we have $ecc(i^*,j^*)\geq
ecc(i,j)>0$. Let $f(\ell,h)=\frac{h}{\sqrt{\ell}}$ if $h\geq0$ and
0, otherwise. By Lemma~\ref{quasiconvex}, $f$ is quasiconvex, so the
call to MSI$(D,L,f)$ will return an index interval $[i',j']$ maximizing
$f(i',j')$ subject to $length(i',j')\geq L$. We next prove that
$ecc(i',j')\geq ecc(i^*,j^*)$, so index interval $[i',j']$ is an optimal
solution. Note that for any index interval $[p,q]$, we have $f(p,q) =
ecc(p,q)$ as long as $ecc(p,q)\geq 0$ or $f(p,q)> 0$. Since
$ecc(i^*,j^*) > 0$, we have $ecc(i^*,j^*) = f(i^*,j^*) >0$. It
follows that $f(i',j') \geq f(i^*,j^*) >0$, which implies that
$ecc(i',j')=f(i',j')$. Therefore, $ecc(i',j')=f(i',j')\geq
f(i^*,j^*) = ecc(i^*,j^*)$.

In the case where $hit(i,j)<0$, we have $length(i^*,j^*)<2L$ by
Lemma~\ref{2L}. It follows that $[i^*,j^*]$ must be contained in
$[2kL+1, 2kL+4L]$ for some $k\in \{0,1,\ldots,\frac{n}{2L}-2\}$ and
thus the index interval returned at Step~5 must be an optimal solution.

We next analyze the running time. By Corollary~\ref{max sum}, Step~1
takes $O(n)$ time. Step~2 takes constant time. By Theorem~\ref{MSI},
Step~3 takes $O(n)$ time. By Corollary~\ref{MaxCW}, each iteration
of the loop at Step~4 takes $O(L^{1/2}T(L^{1/2})+L)$ time. Thus
Step~4 takes
$O(\frac{n}{L}L^{1/2}T(L^{1/2})+n)=O(n\frac{T(L^{1/2})}{L^{1/2}})$
time. Step~5 takes $O(n/L)$ time. Therefore the total running time
is $O(n\frac{T(L^{1/2})}{L^{1/2}})$.
\end{proof}

\section{Concluding remarks}
To the best of our knowledge, there is not any non-trivial lower
bound for the PSEI problem proved so far. Thus, there is still a
large gap between the trivial lower bound of $O(n)$ and the upper
bound of $O(nL_s\frac{(\log\log L_s)^3}{(\log L_s)^2})$ for the PSEI
problem. Bridging this gap remains an open problem.

\section*{Acknowledgments}
   We thank Kuan-Yu Chen, Meng-Han Li, Cheng-Wei Luo, Hung-Lung Wang, and Roger Yang for helpful discussion.

\end{document}